\documentclass[aps,pra,reprint,superscriptaddress]{revtex4-2}

\usepackage{graphicx}

\usepackage{amsmath}
\usepackage{amsthm}
\usepackage{amsfonts}
\usepackage{amssymb}
\usepackage{mathtools}
\usepackage{mathrsfs}
\usepackage{latexsym}
\usepackage{bm}
\usepackage{physics}
\usepackage{tikz}
\usepackage{here}
\usepackage[subrefformat=parens]{subcaption}

\theoremstyle{plain}
\newtheorem{thm}{Theorem}
\newtheorem{cor}[thm]{Corollary}

\newtheorem{prop}[thm]{Proposition}

\theoremstyle{definition}
\newtheorem{dfn}{Definition}

\theoremstyle{remark}

\allowdisplaybreaks
\begin{document}

\title{On the operational and algebraic quantum correlations}

\author{Shun Umekawa}
 \email{umeshun2003@g.ecc.u-tokyo.ac.jp}
 \affiliation{Department of Physics, The University of Tokyo, 5-1-5 Kashiwanoha, Kashiwa, Chiba 277-8574, Japan}
 
\author{Jaeha Lee}
 \email{lee@iis.u-tokyo.ac.jp}
 \affiliation{Institute of Industrial Science, The University of Tokyo, 5-1-5 Kashiwanoha, Kashiwa, Chiba 277-8574, Japan}
 

\begin{abstract}
We investigate the intrinsic ambiguity in the definition of correlation functions arising from the inevitable invasiveness of quantum measurements.
While algebraic correlations defined as expectation values of products of observables are widely used, their relationship to operational ones defined through actual measurement procedures remain unclear.
We demonstrate that the differences among various definitions of correlation functions and those among their underlying (quasi-)joint probability distributions are bounded above by a quantitative measure of measurement invasiveness.
We further obtain a lower bound on the discrepancy among operational and algebraic (quasi-)joint probability distributions, providing a new form of the uncertainty relation.
In addition, we identify an equivalence condition under which operational and algebraic correlations coincide.
As an application, we analyze the quantum violation of the Leggett-Garg inequality and clarify the structural origin of the equivalence among different approaches to observing the violation, including sequential projective measurements and weak-measurement.
Our results provide an operational foundation for the commonly used algebraic concepts of quantum theory.
\end{abstract}

\maketitle

\section{Introduction}
\label{Introduction}
Correlation functions play a significant role in many areas of physics.
However, in quantum theory, correlation functions involve arbitrariness in their definitions due to the incompatibility of quantum measurements.
As is widely known as the uncertainty relation \cite{Heisenberg1927,Kennard1927,Robertson1929,Maassen1988,Arthurs1988,Ozawa2003,Busch2007,Watanabe2011,Erhart2012,Lee2024}, in general, two quantum observables cannot be measured simultaneously with arbitrary precision, and this prevents the convincing definition of the correlation function.

Roughly speaking, definitions of correlation functions can be divided into two categories.
One is the \textit{operational correlation}, which is defined through the actual measurement procedure, and the other is the \textit{algebraic correlation}, which is defined as the expectation value of an algebraic ``product" of quantum observables.
A prominent example is the temporal correlation between observables \(A\) and \(B\) at times \(t_1\) and \(t_2\), respectively.
While the expectation value of the time-ordered product \(\langle B(t_2)A(t_1)\rangle_\rho\) of observables \(A(t_1)\) and \(B(t_2)\) in the Heisenberg picture is often employed as a definition of temporal correlation, it generally differs from the correlation obtained via sequential projective measurements of \(A\) at time \(t_1\) followed by \(B\) at \(t_2\).

The distinction among the definitions of quantum correlations was discussed in \cite{Fujikura2016, Shimizu2017} in the context of linear response theory.
They pointed out that the so-called ``canonical time correlation" in a quantum extension of the fluctuation-dissipation theorem differs from the temporal correlation obtained by the actual measurement process.
Then they argued that the algebraic correlation given by the symmetric product is obtained by what they call the ``quasiclassical measurement" in the thermodynamic limit.

In this work, we show that the difference among operational and algebraic quantum correlations are bounded above by a measure of \textit{invasiveness} of a quantum measurement.
Here, the term ``invasiveness" \cite{Kofler2013,Schild2015,Moreira2019} refers to the extent to which the initial state is substantially disturbed by the measurement, beyond the mere update of observer's knowledge.

We also investigate the difference among underlying (quasi-)joint probability distributions that generate operational and algebraic quantum correlations.
Quasi-joint probability distributions \cite{Wigner1932,Kirkwood1933,Dirac1945,Margenau1961,Cohen1966,Lee2017,Lee2018} are quantum analogs of the genuine joint probability distribution for a tuple of incompatible quantum observables and have various applications in quantum foundations \cite{Kenfack2004,Spekkens2008,Hofmann2015,Halliwell2016,Bievre2021,Nogami2025} and quantum computations \cite{Veitch2012,Mari2012,Raussendorf2017}.
Contrary to operational correlations that are defined through actual measurement procedures, algebraic correlations do not have an underlying joint probability distribution. 
Nevertheless, within the framework of quantum/quasi-classical representations \cite{Lee2017,Lee2018}, we admit quasi-joint-probability distributions as such distributions that naturally reproduce algebraic correlations.

The discrepancy between the operational probability and a quasi-probability was discussed in, for example, \cite{Johansen2007}.
Here, we present a quantitative evaluation of their differences and provide both upper and lower bounds among them.
While the upper bounds ensure that the differences among several definitions remain limited and are a consequence of the invasiveness and incompatibility of quantum measurements, the lower bounds present a new form of the uncertainty relations.

As an application of our discussion, we analyze the quantum violation of the Leggett-Garg (LG) inequality \cite{Leggett1985,Emary2014}.
The LG inequality is a temporal analog of the Bell inequality \cite{Bell1964,Clauser1969}, which concerns temporal correlations in a single system and serves as a test to distinguish quantum theory from ontological models. 
Although the original discussion by Leggett and Garg concerns operational correlations given by the sequential projective measurements, it is widely known that the same violation can be observed for algebraic correlations and related concepts \cite{Fritz2010,Hofmann2015,Halliwell2016,Pan2020,Ruskov2006,Williams2008,Palacios-Laloy2010,Goggin2011,Suzuki2012}.
In this work, we present the equivalence among such concepts, algebraic correlations, quasi-probabilities, weak value, and weak measurements from the perspective of the general framework of quantum/quasi-classical representations. 
Furthermore, we prove that the coincidence between the operational and algebraic correlations occurs only for dichotomic observables, thereby revealing the underlying structure of the equivalence among the ways of analyzing the quantum violation of the LG inequality.

This paper is organized as follows.
In section \ref{Invasiveness of a measurement}, we introduce the measure of invasiveness of a measurement and discuss its basic properties.
By using the measure of invasiveness, we present an upper bound on the difference between operational and algebraic correlations in section \ref{Relation between operational and algebraic quantum correlations}.
In section \ref{General framework of quantum/quasi-classical representation}, we introduce the notion of quasi-probabilities and review the general framework of quantum/quasi-classical representation.
We then present the relation between operational and algebraic (quasi-) probability distributions in section \ref{Relation between operational and algebraic (quasi-) probability distributions}.
In section \ref{A generalization to general measurements}, we present a generalization to general measurements.
In section \ref{Application to the Leggett-Garg inequality}, we discuss the quantum violation of the Leggett-Garg inequality.
Lastly, we give conclusion and discussion of the results in section \ref{Conclusion}.

Let \(\mathcal{H}\), \(\mathcal{B}(\mathcal{H}),\; \mathcal{B}_{\mathrm{sa}}(\mathcal{H})\), \(\mathcal{T}(\mathcal{H})\), \(\mathcal{K}(\mathcal{H}),\; \mathcal{K}_{\mathrm{sa}}(\mathcal{H})\), \(\mathcal{B}(\mathcal{T}(\mathcal{H}))\) and \(\mathcal{S}(\mathcal{H}):=\{\rho\in\mathcal{T}(\mathcal{H})\;|\;\rho\geq0,\;\mathrm{tr}[\rho]=1\}\) denote a Hilbert space, the space of bounded operators on \(\mathcal{H}\), the space of self-adjoint bounded operators on \(\mathcal{H}\), the space of trace class operators on \(\mathcal{H}\), the sapce of compact operators on \(\mathcal{H}\), the space of self-adjoint compact operators on \(\mathcal{H}\), the space of bounded operators on \(\mathcal{T}(\mathcal{H})\) and the state space of the quantum system identified by \(\mathcal{H}\), respectively.
We denote the operator norm and the trace norm by \(\norm{\cdot}\) and \(\norm{\cdot}_{1}\) , respectively.
We also denote the total variation of a signed measure \(\mu\) as \(\norm{\mu}_{\mathrm{TV}}\) and the spremum norm of a measurable function \(f\) by \(\norm{f}_{\infty}\).

A quantum measurement is represented as a positive instrument \(M=\{M_m\}_m\) and we denote the non-selective process of \(M\) as \(\Lambda_M:=\sum_m M_m\).
We denote the spectrum of a quantum observable \(A\in\mathcal{K}_{\mathrm{sa}}(\mathcal{H})\) by \(\sigma(A)\) and the spectral family associated with \(A\)  by \(P_A:=\{P_A(a)\}_{a\in \sigma(a)}\) and (L\"{u}der-) projective measurement of \(A\) as \(\Pi_A:=\{\Pi_A(a)\in\mathcal{B}(\mathcal{T}(\mathcal{H}))\}_{a\in \sigma(A)},\; \Pi_A(a):\rho\mapsto P_A(a)\rho P_a(a)\).
For simplicity, we write \(\Lambda_A:=\Lambda_{\Pi_A}\).

\section{Invasiveness of a measurement}
\label{Invasiveness of a measurement}
In this work, we focus on two-point correlations and restrict our attention to operational correlations arising from sequential projective measurements and simple forms of algebraic correlations.
\begin{dfn}[Operational quantum correlations]
We call \textit{operational quantum joint-probability} and \textit{operational quantum correlation of \(A\to B\)} as the joint probability and the correlation describes the statistics of sequential (L\"uders-) projective measurement of \(A\) and \(B\) and denote them as 
\begin{equation}
P_{\rho}^{A\to B}(a\to b):=\mathrm{tr}[\Pi_A(a)(\rho) P_B(b)]
\end{equation}
and
\begin{equation}
\langle A\to B\rangle_{\rho}^{\mathrm{op}}:=\sum_{a,b} abP_{\rho}^{A\to B}(a\to b),
\end{equation}
respectively.
\end{dfn}

\begin{dfn}[Algebraic quantum correlations]
For \(\alpha\in\mathbb{R}\), we denote
\begin{equation}
A\circ_{\alpha} B:=\alpha AB+(1-\alpha)BA.
\end{equation}
Then its expectation value
\(\langle  A\circ_{\alpha} B \rangle_{\rho}:=\mathrm{tr}\qty[\rho  A\circ_{\alpha} B]\)
is called \textit{algebraic quantum correlations} of quantum observables \(A\) and \(B\).
\end{dfn}

One of the dominant factors that causes difficulty in defining the correlation function is the inevitable \textit{invasiveness} of quantum measurements.
Although a classical measurement also alters the state as the increase of our knowledge, an ideal classical measurement is considered to be \textit{non-invasive} in the sense that the initial state remains unchanged under the non-selective measurement, that is, when the measurement is performed but its outcome is disregarded.
A rigorous formulation of invasiveness in this sense was discussed, for example,  in \cite{Kofler2013, Schild2015, Moreira2019}.

Here, we use the trace-norm distance between the initial state and the post-measurement state under the non-selective implementation as a measure of the invasiveness of a measurement:
\begin{dfn}[Invasiveness measure]
We define the \textit{invasiveness} of a measurement \(M:=\{M_m\}_m\) as
\begin{equation}
\mathrm{Inv}_{M}(\rho):=\norm{\Lambda_M(\rho)-\rho}_{1}
\end{equation}
\begin{equation}
\mathrm{Inv}(M)
:=\hspace{-2pt}\underset{\rho\in \mathcal{S}(\mathcal{H})}{\mathrm{sup}}\mathrm{Inv}_M(\rho)
=\norm{\Lambda_M-\mathrm{id}_{\mathcal{T}(\mathcal{H})}}.
\end{equation}
\end{dfn}

For simplicity, we write  \(\mathrm{Inv}_{\Pi_A}(\rho)\) as \(\mathrm{Inv}_A(\rho)\) and \(\mathrm{Inv}(\Pi_A)\) as \(\mathrm{Inv}(A)\).
We can regard \(M\) as invasive whenever \(\mathrm{Inv}(M)>0\), since this implies that the state changes after the measurement \(M\) even if we forget the outcome of the measurement.
This guarantees that the measurement \(M\) is not merely an epistemic update.
Moreover, this definition of invasiveness ensures that the \textit{no-signalling in time} condition \cite{Kofler2013} 
\begin{equation}
\sum_{m}P_{\rho}^{M\to N}(m\to n)=P_{\rho}^{N}(n)
\end{equation}
is satisfied for any subsequent measurement \(N:=\{N_n\}_n\) iff \(\mathrm{Inv}_M(\rho)=0\). 
As discussed in \cite{Schild2015, Moreira2019}, the invasiveness of a measurement is closely related to the coherence of a state.
For example,  \(\mathrm{Inv}_A(\rho)\) can also be interpreted as the measure of coherence \cite{Shao2015} of the state \(\rho\) in the \(A\)-basis.

We note that this definition of invasiveness is closely related to the notion of disturbance of an observable, which quantifies how an observable was disturbed by a measurement and is used frequently in the context of the uncertainty relation \cite{Ozawa2003}. 
The definition of the disturbance is given as
\begin{dfn}[disturbance operator \cite{Ozawa2003}]
We denote
\begin{equation}
\delta_M(A):=\Lambda_M^{\dagger}(A)-A
\end{equation}
and call it \textit{disturbance operator} of an observable \(A\) by the measurement \(M\).
\end{dfn}

\begin{dfn}[maximum disturbance]
We denote
\begin{equation}
\Delta_M(A;\rho):=\hspace{-2pt}\underset{f\in C_0(\sigma(A))}{\mathrm{sup}}\frac{\qty|\langle\delta_M(f(A))\rangle_\rho|}{\norm{f(A)}}
\end{equation}
and call it \textit{maximum disturbance} of \(A\)-observables in the state \(\rho\) by a measurement \(M\).
\end{dfn}
For simplicity, we write \(\delta_{\Pi_A}(B)\) as \(\delta_A(B)\) and \(\Delta_{\Pi_A}(B;\rho)\) as \(\Delta_A(B;\rho)\).
\(\Delta_M(A;\rho)\) quantifies how the projective measurement \(\Pi_A\) of an observable \(A\) is disturbed by the measurement \(M\).
As we will see later, \(\Delta_M(A;\rho)\) is particularly relevant when we focus on the probability distribution for the measurement of \(A\) since the assigned values themselves are not important in this case.

We can see that the invasiveness of a measurement and the disturbance of an observable by a measurement are related as follows:
\begin{prop}
\label{relation between invasiveness and disturbance}
\begin{equation}
\mathrm{Inv}_M(\rho)
=\hspace{-2pt}\underset{A\in\mathcal{K}(\mathcal{H})}{\mathrm{sup}}\frac{\qty|\langle\delta_M(A)\rangle_\rho|}{\norm{A}}
=\hspace{-2pt}\underset{A\in\mathcal{K}_{\mathrm{sa}}(\mathcal{H})}{\mathrm{sup}}{\Delta_M(A;\rho)}
\end{equation}
\end{prop}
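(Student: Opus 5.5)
The plan is to reduce everything to the standard duality $\mathcal{T}(\mathcal{H})\cong\mathcal{K}(\mathcal{H})^{*}$, under which the trace norm of a trace-class operator $X$ is
\begin{equation}
\norm{X}_1=\sup_{A\in\mathcal{K}(\mathcal{H}),\,A\neq0}\frac{\qty|\mathrm{tr}[XA]|}{\norm{A}} .
\end{equation}
I would apply this to $X:=\Lambda_M(\rho)-\rho$. By the definition of the adjoint map, $\mathrm{tr}[\Lambda_M(\rho)A]=\mathrm{tr}[\rho\,\Lambda_M^{\dagger}(A)]$. Hence $\mathrm{tr}[XA]=\mathrm{tr}[\rho\,\delta_M(A)]=\langle\delta_M(A)\rangle_\rho$, and the first equality follows at once. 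Here $\Lambda_M^\dagger$ is understood as the Banach-space adjoint acting on $\mathcal{B}(\mathcal{H})$, so $\delta_M(A)\in\mathcal{B}(\mathcal{H})$ and the pairing with $\rho$ is well defined.

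For the second equality, I would first show that the supremum may be restricted to self-adjoint compact $A$. Since $\Lambda_M$ is positive, it is Hermiticity preserving, so $X$ is a self-adjoint trace-class operator, and hence compact. Write its Jordan decomposition $X=X_+-X_-$, so that $\norm{X}_1=\mathrm{tr}X_++\mathrm{tr}X_-$. Let $Q_\pm^{(n)}$ be the finite-rank spectral projections of $X$ onto the eigenvectors with the $n$ largest positive (respectively, largest-modulus negative) eigenvalues. Then $A_n:=Q_+^{(n)}-Q_-^{(n)}\in\mathcal{K}_{\mathrm{sa}}(\mathcal{H})$ satisfies $\norm{A_n}\le1$ and $\mathrm{tr}[XA_n]\to\norm{X}_1$. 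Therefore
\begin{equation}
\mathrm{Inv}_M(\rho)=\sup_{A\in\mathcal{K}_{\mathrm{sa}}(\mathcal{H})}\frac{\qty|\langle\delta_M(A)\rangle_\rho|}{\norm{A}} .
\end{equation}

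It remains to compare this with $\sup_A\Delta_M(A;\rho)$. For the bound $\sup_A\Delta_M(A;\rho)\le\mathrm{Inv}_M(\rho)$, I would note that for $A\in\mathcal{K}_{\mathrm{sa}}(\mathcal{H})$ and $f\in C_0(\sigma(A))$, the operator $f(A)$ is again compact. It is also normal, so the first equality applied to it gives $|\langle\delta_M(f(A))\rangle_\rho|/\norm{f(A)}\le\mathrm{Inv}_M(\rho)$. For the reverse bound, I would take $f=\mathrm{id}$, which gives $\Delta_M(A;\rho)\ge|\langle\delta_M(A)\rangle_\rho|/\norm{A}$ for each self-adjoint compact $A$. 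Taking the supremum over $A$ and using the restricted formula above then closes the chain.

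The main obstacle is being careful about what $C_0(\sigma(A))$ means. For compact $A$ the spectrum accumulates only at $0$, and $f(A)$ is compact only when $f(0)=0$, where $0$ is the accumulation point. I would therefore interpret $C_0(\sigma(A))$ as continuous functions on $\sigma(A)$ vanishing at $0$ (equivalently, vanishing at infinity on $\sigma(A)\setminus\{0\}$). With that reading the identity function is admissible and every $f(A)$ is compact. If instead $f(A)$ is allowed to be a general bounded operator, the upper bound still holds, because $\norm{X}_1$ also dominates $|\mathrm{tr}[XB]|/\norm{B}$ for bounded $B$. So either reading works, and the only genuinely analytic input is the $\mathcal{K}$–$\mathcal{T}$ duality in infinite dimensions.
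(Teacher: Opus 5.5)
Your proposal is correct and follows the paper's route: the first equality is the paper's duality $\mathcal{T}(\mathcal{H})\simeq(\mathcal{K}(\mathcal{H}))^*$ applied to $\Lambda_M(\rho)-\rho$ and moved onto $\langle\delta_M(A)\rangle_\rho$ via $\Lambda_M^{\dagger}$, and your bound $\sup_{A}\Delta_M(A;\rho)\le\mathrm{Inv}_M(\rho)$ is the paper's remark that $C^*(A)\subset\mathcal{K}(\mathcal{H})$. The paper calls the second equality obvious and never argues the reverse inequality, so your extra step is a welcome completion rather than a different method: you use that $\Lambda_M(\rho)-\rho$ is self-adjoint to restrict the supremum to $\mathcal{K}_{\mathrm{sa}}(\mathcal{H})$ via the finite-rank approximants $Q_+^{(n)}-Q_-^{(n)}$, and then take $f=\mathrm{id}$.
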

\begin{proof}
From the isomorphism \(\mathcal{T}(\mathcal{H})\simeq\qty(\mathcal{K}(\mathcal{H}))^*\), we find
\begin{align*}
\mathrm{Inv}_M(\rho)
&=\underset{A\in\mathcal{K}(\mathcal{H})}{\mathrm{sup}}\frac{\mathrm{tr}\Bigl[\bigl(\Lambda_M(\rho)-\rho\bigr)A\Bigr]}{\norm{A}}\\
&=\underset{A\in\mathcal{K}(\mathcal{H})}{\mathrm{sup}}\frac{\qty|\langle\delta_M(A)\rangle_\rho|}{\norm{A}}.
\end{align*}
The second equality is obvious since \(C^*(A)\subset \mathcal{K}(\mathcal{H})\).
\end{proof}

\section{Relation between operational and algebraic quantum correlations}
\label{Relation between operational and algebraic quantum correlations}
In this section, we will see that the difference among several definitions of quantum correlations are bounded by the invasiveness of the measurement.
First, we give an algebraic description of the operational correlation:
\begin{prop}
\label{algebraic description of operational correlation}
\begin{equation}
\langle A\to B \rangle^{\mathrm{op}}_\rho=\langle  A\circ_{\alpha} B\rangle_{\Lambda_A(\rho)}.
\end{equation}
\end{prop}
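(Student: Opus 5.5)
The plan is to compute both sides directly in terms of the spectral projections of \(A\) and \(B\), and to reduce everything to the fact that \(\Lambda_A(\rho)\) commutes with \(A\). This commutation is what makes the parameter \(\alpha\) drop out entirely.

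\emph{Left-hand side.} I expand the definition. Since \(B=\sum_b b\,P_B(b)\), I get
\begin{equation*}
\langle A\to B\rangle^{\mathrm{op}}_\rho=\sum_{a,b}ab\,\mathrm{tr}\bigl[P_A(a)\rho P_A(a)P_B(b)\bigr]=\sum_a a\,\mathrm{tr}\bigl[P_A(a)\rho P_A(a)B\bigr].
\end{equation*}
By cyclicity of the trace, together with \(P_A(a)^2=P_A(a)\) and \(aP_A(a)=AP_A(a)=P_A(a)A\), this equals both
\begin{equation*}
\mathrm{tr}\Bigl[\Bigl(\sum_a P_A(a)\rho P_A(a)\Bigr)AB\Bigr]\quad\text{and}\quad\mathrm{tr}\Bigl[\Bigl(\sum_a P_A(a)\rho P_A(a)\Bigr)BA\Bigr].
\end{equation*}
That is, the left-hand side equals both \(\mathrm{tr}[\Lambda_A(\rho)AB]\) and \(\mathrm{tr}[\Lambda_A(\rho)BA]\).

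\emph{Combining.} Taking the convex-type combination with weights \(\alpha\) and \(1-\alpha\) then gives \(\mathrm{tr}[\Lambda_A(\rho)\,A\circ_\alpha B]=\langle A\circ_\alpha B\rangle_{\Lambda_A(\rho)}\), for every real \(\alpha\). A cleaner way to phrase this step is to note first that \(A\Lambda_A(\rho)=\Lambda_A(\rho)A\). Cyclicity then gives \(\mathrm{tr}[\Lambda_A(\rho)AB]=\mathrm{tr}[A\Lambda_A(\rho)B]=\mathrm{tr}[\Lambda_A(\rho)BA]\), and this is the reason the result does not depend on \(\alpha\).

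\emph{Main obstacle.} The only delicate point is the infinite-dimensional setting: \(A\) is compact, so \(\sigma(A)\) may be countably infinite and the \(a=0\) eigenspace may be infinite-dimensional. I would handle this as follows.
\begin{itemize}
\item The sums \(\sum_a P_A(a)\rho P_A(a)\) converge in trace norm, and \(\Lambda_A\) is a bounded map on \(\mathcal{T}(\mathcal{H})\).
\item The sum over \(b\) uses the spectral decomposition of the compact operator \(B\), which converges in operator norm. Exchanging it with the trace is justified by the duality \(\mathcal{T}(\mathcal{H})\simeq(\mathcal{K}(\mathcal{H}))^*\) already used in Proposition \ref{relation between invasiveness and disturbance}.
\item Similarly, \(\sum_a aP_A(a)\) converges in norm to \(A\), and every trace with a trace-class factor is continuous.
\end{itemize}
With these justifications the interchanges of sums and traces above are legitimate.
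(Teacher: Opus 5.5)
Your proof is correct and follows the paper's argument. The paper uses \(\Pi_A(a)(\rho)=\Lambda_A(\rho)P_A(a)=P_A(a)\Lambda_A(\rho)\), which is the projection-wise form of your commutation \(A\Lambda_A(\rho)=\Lambda_A(\rho)A\), to show that the operational correlation equals both \(\langle AB\rangle_{\Lambda_A(\rho)}\) and \(\langle BA\rangle_{\Lambda_A(\rho)}\), then takes the \(\alpha\)-combination; your convergence remarks supply the infinite-dimensional justification the paper leaves implicit.
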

\begin{proof}
By using
\begin{equation*}
\Pi_A(a)(\rho)=\Pi_A(\rho)P_A(a)=P_A(a)\Pi_A(\rho),
\end{equation*}
we find that
\begin{align*}
\langle A\to B \rangle^{\mathrm{op}}_\rho
&=\sum_{a,b}ab\,\mathrm{tr}\qty[\Pi_A(a)(\rho)P_B(b)]\\
&=\sum_{a,b}ab\,\mathrm{tr}\qty[\Pi_A(\rho)P_A(a)P_B(b)]\\
&=\langle  AB\rangle_{\Lambda_A(\rho)},\\
\langle A\to B \rangle^{\mathrm{op}}_\rho
&=\sum_{a,b}ab\,\mathrm{tr}\qty[P_A(a)\Pi_A(\rho)P_B(b)]\\
&=\langle  BA\rangle_{\Lambda_A(\rho)}.
\end{align*}
Thus, the statement holds for arbitrary \(\alpha\in\mathbb{R}\).
\end{proof}
From this relation, we also find that the maximum correlation that can be achievable by algebraic correlations is larger than that of the operational correlation as follows.

\begin{cor}
\label{inequality about maximum correlation}
\begin{equation}
\underset{\rho\in \mathcal{S}(\mathcal{H})}{\mathrm{sup}}|\langle A\to B\rangle^{\mathrm{op}}_\rho|
\leq \underset{\rho\in \mathcal{S}(\mathcal{H})}{\mathrm{sup}} |\langle A\circ_{\alpha} B\rangle_\rho|.
\end{equation}
\end{cor}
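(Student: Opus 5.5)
The plan is to derive the corollary directly from Proposition~\ref{algebraic description of operational correlation}. That proposition gives, for every state \(\rho\) and every \(\alpha\in\mathbb{R}\),
\begin{equation*}
\langle A\to B\rangle^{\mathrm{op}}_\rho=\langle A\circ_\alpha B\rangle_{\Lambda_A(\rho)} .
\end{equation*}
So each operational correlation is itself an algebraic correlation, just evaluated in a different state, namely \(\Lambda_A(\rho)\). The argument therefore reduces to showing that \(\Lambda_A(\rho)\) is an admissible state in the supremum on the right-hand side.

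First I will check that \(\Lambda_A\) maps \(\mathcal{S}(\mathcal{H})\) into itself.
\begin{itemize}
\item \(\Lambda_A(\rho)=\sum_a P_A(a)\rho P_A(a)\) is a sum of positive operators, hence positive.
\item Its trace is \(\sum_a \mathrm{tr}[\rho P_A(a)]=\mathrm{tr}[\rho]=1\), using that the spectral projections sum to the identity.
\item For compact self-adjoint \(A\), the sum over \(\sigma(A)\) may be countably infinite. The partial sums then converge in trace norm, since they are positive, increasing and have bounded trace, so the limit is still a density operator.
\end{itemize}
This establishes \(\Lambda_A(\rho)\in\mathcal{S}(\mathcal{H})\).

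Then, for every \(\rho\),
\begin{equation*}
|\langle A\to B\rangle^{\mathrm{op}}_\rho|=|\langle A\circ_\alpha B\rangle_{\Lambda_A(\rho)}|\le\sup_{\sigma\in\mathcal{S}(\mathcal{H})}|\langle A\circ_\alpha B\rangle_\sigma| .
\end{equation*}
Taking the supremum over \(\rho\) on the left-hand side yields the claim.

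I do not expect a real obstacle here. The only point needing care is the convergence of the infinite sums, and the same convergence issue is already implicit in Proposition~\ref{algebraic description of operational correlation}. Because that proposition holds for arbitrary \(\alpha\), the corollary also holds uniformly in \(\alpha\).
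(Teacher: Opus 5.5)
Your proposal is correct and follows essentially the paper's own argument: rewrite the operational correlation via Prop.~\ref{algebraic description of operational correlation} as \(\langle A\circ_\alpha B\rangle_{\Lambda_A(\rho)}\), then bound it by the supremum over all states. Your explicit check that \(\Lambda_A\) maps \(\mathcal{S}(\mathcal{H})\) into itself is exactly what the paper uses implicitly when it writes the supremum over \(\mathrm{Ran}(\Lambda_A|_{\mathcal{S}(\mathcal{H})})\).
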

\begin{proof}
By Prop.~\ref{algebraic description of operational correlation}, we find
\begin{align*}
\underset{\rho\in \mathcal{S}(\mathcal{H})}{\mathrm{sup}}|\langle A\to B\rangle^{\mathrm{op}}_\rho|
&=\underset{\rho\in \mathrm{Ran}(\Lambda_A|_{\mathcal{S}(\mathcal{H})})}{\mathrm{sup}} |\langle A\circ_{\alpha} B\rangle_{\rho}|\\
&\leq \underset{\rho\in \mathcal{S}(\mathcal{H})}{\mathrm{sup}} |\langle A\circ_{\alpha} B\rangle_\rho|.
\end{align*}
\end{proof}
By using algebraic discription of the operational correlation, we can show that the difference between operational correlations and algebraic correlations are bounded by the product of the invasiveness of the measurement and the operator norm of \( A\circ_{\alpha} B\), which quantifies ``how large observables \(A\) and \(B\) can simultaneously be."
\begin{thm}
\label{upper bound of difference between operational and algebraic correlations}
\begin{equation}
\qty|\langle A\to B \rangle^{\mathrm{op}}_\rho-\langle  A\circ_{\alpha} B\rangle_\rho|
\leq \norm{ A\circ_{\alpha} B}\;\mathrm{Inv}_A(\rho).
\end{equation}
In particular, for \(\alpha\in[0,1]\),
\begin{equation}
\qty|\langle A\to B \rangle^{\mathrm{op}}_\rho-\langle  A\circ_{\alpha} B\rangle_\rho|
\leq \norm{A}\norm{B}\;\mathrm{Inv}_A(\rho).
\end{equation}
\end{thm}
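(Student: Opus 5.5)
By Prop.~\ref{algebraic description of operational correlation}, the operational correlation is the algebraic correlation evaluated in the post-measurement state. So the difference is the expectation of one fixed operator $A\circ_{\alpha}B$ in two states, and the whole proof reduces to a trace-norm duality estimate. Concretely, for every $\alpha\in\mathbb{R}$,
\begin{equation*}
\langle A\to B \rangle^{\mathrm{op}}_\rho-\langle A\circ_{\alpha} B\rangle_\rho
=\mathrm{tr}\qty[\bigl(\Lambda_A(\rho)-\rho\bigr)\, A\circ_{\alpha} B].
\end{equation*}
We then apply the H\"older-type inequality $|\mathrm{tr}[TX]|\leq\norm{T}_1\norm{X}$, valid for $T\in\mathcal{T}(\mathcal{H})$ and $X\in\mathcal{B}(\mathcal{H})$. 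Taking $T=\Lambda_A(\rho)-\rho$ and $X=A\circ_\alpha B$ gives
\begin{equation*}
\qty|\langle A\to B \rangle^{\mathrm{op}}_\rho-\langle A\circ_{\alpha} B\rangle_\rho|\leq\norm{A\circ_\alpha B}\,\norm{\Lambda_A(\rho)-\rho}_1 ,
\end{equation*}
and the last factor is $\mathrm{Inv}_A(\rho)$ by definition. This proves the first inequality.

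We could instead phrase the same step through Prop.~\ref{relation between invasiveness and disturbance}: the difference equals $\langle\delta_A(A\circ_\alpha B)\rangle_\rho$, and that proposition bounds it by $\norm{A\circ_\alpha B}\,\mathrm{Inv}_A(\rho)$. One caveat is that $A\circ_\alpha B$ is compact because $A,B\in\mathcal{K}_{\mathrm{sa}}(\mathcal{H})$, but it need not be self-adjoint. This causes no trouble, since that proposition takes the supremum over all of $\mathcal{K}(\mathcal{H})$.

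For the second inequality, take $\alpha\in[0,1]$, so that $\alpha$ and $1-\alpha$ are both nonnegative. The triangle inequality and submultiplicativity of the operator norm then give
\begin{equation*}
\norm{A\circ_\alpha B}\leq\alpha\norm{AB}+(1-\alpha)\norm{BA}\leq\norm{A}\norm{B}.
\end{equation*}
Substituting this into the first inequality yields the claim.

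The only points needing care are convergence issues. The sums over the spectra in Prop.~\ref{algebraic description of operational correlation} must converge for compact observables, which we take as given from that proposition. The trace against $\Lambda_A(\rho)$ must be well defined, which holds because $\Lambda_A$ is trace preserving and maps $\mathcal{T}(\mathcal{H})$ into itself. I do not expect any real obstacle: the main step is the identification supplied by Prop.~\ref{algebraic description of operational correlation}, and the rest is trace/operator-norm duality.
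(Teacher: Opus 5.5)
Your proposal is correct and follows the paper's proof: you rewrite the operational correlation via Prop.~\ref{algebraic description of operational correlation} as $\langle A\circ_\alpha B\rangle_{\Lambda_A(\rho)}$, apply trace/operator-norm duality, and for $\alpha\in[0,1]$ use $\norm{A\circ_\alpha B}\leq\alpha\norm{AB}+(1-\alpha)\norm{BA}\leq\norm{A}\norm{B}$. Your alternative phrasing through Prop.~\ref{relation between invasiveness and disturbance} is only a restatement of that same duality step.
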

\begin{proof}
By Prop.~\ref{algebraic description of operational correlation}, we find
\begin{align}
\qty|\langle A\to B \rangle^{\mathrm{op}}_\rho-\langle  A\circ_{\alpha} B\rangle_\rho|
&=\qty|\langle  A\circ_{\alpha} B\rangle_{\Lambda_A(\rho)}-\langle  A\circ_{\alpha} B\rangle_\rho|\notag\\
&=\qty|\mathrm{tr}\bigl[\qty(A\circ_{\alpha} B)\qty(\Lambda_A(\rho)-\rho)\bigr]|\notag\\
&\leq \norm{A\circ_{\alpha} B} \norm{\Lambda_A(\rho)-\rho}_{1}.\notag
\end{align}
The latter statement follows from \(\norm{A\circ_{\alpha}B}\leq|\alpha|\norm{AB}+|1-\alpha|\norm{BA}\leq\norm{A}\norm{B}\).
\end{proof}
The upper bound of the difference between two orders of operational correlations follows straightforwardly.
\begin{cor}
\label{upper bound of difference between two operational correlations}
\begin{equation}
|\langle A\to B \rangle^{\mathrm{op}}_\rho-\langle B\to A \rangle^{\mathrm{op}}_\rho|
\leq \norm{ A\circ_{\alpha} B}\Bigl(\mathrm{Inv}_A(\rho)+\;\mathrm{Inv}_B(\rho)\Bigr).
\end{equation}
\end{cor}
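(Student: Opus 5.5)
We will prove the statement by inserting the algebraic correlation \(\langle A\circ_{\alpha} B\rangle_\rho\) as an intermediate term and applying the triangle inequality:
\begin{align*}
|\langle A\to B \rangle^{\mathrm{op}}_\rho-\langle B\to A \rangle^{\mathrm{op}}_\rho|
&\leq |\langle A\to B \rangle^{\mathrm{op}}_\rho-\langle A\circ_{\alpha} B\rangle_\rho|\\
&\quad+|\langle A\circ_{\alpha} B\rangle_\rho-\langle B\to A \rangle^{\mathrm{op}}_\rho|.
\end{align*}
The first term is bounded directly by Thm.~\ref{upper bound of difference between operational and algebraic correlations}, giving \(\norm{A\circ_{\alpha}B}\,\mathrm{Inv}_A(\rho)\).

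For the second term we apply the same theorem with the roles of \(A\) and \(B\) exchanged. This needs some care, because the theorem controls \(\langle B\to A\rangle^{\mathrm{op}}_\rho-\langle B\circ_{\beta} A\rangle_\rho\) for an arbitrary \(\beta\in\mathbb{R}\).

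The key observation is the identity
\[
B\circ_{1-\alpha}A=(1-\alpha)BA+\alpha AB=A\circ_{\alpha}B.
\]
Since Prop.~\ref{algebraic description of operational correlation} holds for every real parameter, we may choose \(\beta=1-\alpha\). The theorem then gives
\[
|\langle B\to A\rangle^{\mathrm{op}}_\rho-\langle A\circ_{\alpha}B\rangle_\rho|\leq\norm{A\circ_{\alpha}B}\,\mathrm{Inv}_B(\rho).
\]
Adding this to the bound for the first term yields the claimed inequality.

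The only point needing attention is this parameter matching, so that the same operator norm \(\norm{A\circ_{\alpha}B}\) appears in both terms. Once the identity above is noticed, the argument is immediate. As a byproduct, restricting to \(\alpha\in[0,1]\) gives the bound \(\norm{A}\norm{B}(\mathrm{Inv}_A(\rho)+\mathrm{Inv}_B(\rho))\). Since \(\alpha\) is free, one may also take the infimum over \(\alpha\) of the right-hand side.
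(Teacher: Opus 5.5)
Your proof is correct and is essentially the paper's argument. The paper writes both operational correlations as expectations of the same operator \(A\circ_{\alpha}B\) in \(\Lambda_A(\rho)\) and \(\Lambda_B(\rho)\), implicitly using your identity \(B\circ_{1-\alpha}A=A\circ_{\alpha}B\), and then splits \(\Lambda_A(\rho)-\Lambda_B(\rho)\) through \(\rho\); this is the same triangle inequality you obtain by applying Thm.~\ref{upper bound of difference between operational and algebraic correlations} twice.
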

\begin{proof}
By Prop.~\ref{algebraic description of operational correlation}, we find
\begin{align}
&|\langle A\to B \rangle^{\mathrm{op}}_\rho-\langle B\to A \rangle^{\mathrm{op}}_\rho|\notag\\
=&\qty|\langle  A\circ_{\alpha} B\rangle_{\Lambda_A(\rho)}-\langle  A\circ_{\alpha} B\rangle_{\sum_{b}\Pi_B(b)(\rho)}|\notag\\
\leq &\norm{ A\circ_{\alpha} B}\Bigl(\norm{\Lambda_A(\rho)-\rho}_{1}+\norm{\Lambda_B(\rho)-\rho}_{1}\Bigr).\notag
\end{align}
\end{proof}

The above inequalities ensure that the difference among several definitions of correlation functions is due to the invasiveness of the measurement as expected, and they are quantitatively bounded by the measure of invasiveness.

\section{General framework of quantum/quasi-classical representation}
\label{General framework of quantum/quasi-classical representation}
Similarly to the case of the correlation functions, the genuine joint-probability distribution for the values of multiple quantum observables cannot be defined in general due to the incompatibility of quantum measurements.
\textit{Quasi-joint probability (QJP) distributions} \cite{Wigner1932,Kirkwood1933,Dirac1945,Margenau1961,Cohen1966,Lee2017,Lee2018} are quantum analogs of the joint-probability distribution for a tuple of incompatible observables.
While QJPs generally take complex values, they are similar to the genuine joint-probability distribution in the sense that all of their marginal distributions reproduce the standard Born rule for each observable.
Although QJPs have arbitrariness in their definitions due to the non-commutativity of quantum observables and have historically been defined in a heuristic manner, they are unified within the general framework of quantum/quasi-classical representations \cite{Lee2017,Lee2018}.

In contrast to the operational correlations which are defined through the actual measurement procedures, algebraic correlations do not have underlying joint-probability distributions.
However, within the framework of quantum/quasi-classical representations where the quasi-classicalization of the states and the quantization of multi-variable functions are dualistically understood \cite{Cohen1966,Lee2017,Lee2018}, we can regard QJPs as such underlying distributions that naturally reproduce algebraic correlations.

In this section, we briefly review the general framework of \textit{quantum/quasi-classical representations} \cite{Lee2017,Lee2018}.

In this framework, we first define the \textit{quasi-joint spectral distributions (QJSDs)} \cite{Lee2017,Lee2018} for a (generally non-commutative) tuple of quantum observables as an generalization of the spectral measures for (strongly) commutative cases as follows:
\begin{dfn}[Quasi-joint spectral distributions]
The quasi-joint spectral distributions for a tuple of quantum observables \(\bm{A}=(A_1,\cdots, A_n)\) is given by the inverse Fourier transform
\begin{equation}
\begin{aligned}
\#_{\bm{A}}(\bm{a})
&:=(\mathscr{F}^{-1}\hat{\#}_{\bm{A}})(\bm{a})\\
&=\frac{1}{\sqrt{2\pi}^n}\int_{\mathbb{R}^n} e^{i\bm{s}\cdot\bm{a}}\hat{\#}_{\bm{A}}(\bm{s})\;\mathrm{d}\bm{s}
\end{aligned}
\end{equation}
of the \textit{hashed operators}
\begin{equation}
\hat{\#}_{\bm{A}}(\bm{s}):=
\left\{
\begin{gathered}
\text{a suitable ``mixture" of the ``disintegrated"} \\
\text{components of the unitary groups }\\
e^{-is_1 A_1}, \cdots, e^{-i s_n A_n}
\end{gathered}
\right\}.
\end{equation}
\end{dfn}
Examples of hashed operators for the simplest case \(\bm{A}=(A,B)\) include
\begin{equation}
\hat{\#}_{(A,B)}(s,t)=
\left\{
\begin{aligned}
&e^{-isA}e^{-itB}, \\
&e^{-i\alpha sA}e^{-itB}e^{-i(1-\alpha)sA}, \\
&\frac{1+\alpha}{2}e^{-isA}e^{-it\hat{B}}+\frac{1-\alpha}{2}e^{-itB}e^{-isA}, \\
&e^{-i\overline{(sA+tB)}} = \lim_{N \to \infty}(e^{-i\frac{s}{N}A}e^{-i\frac{t}{N}B})^N, \\
&\frac{1}{2}\int_{-1}^1 e^{-i\frac{1-k}{2}sA} e^{-itB} e^{-i\frac{1+k}{2}sA} \,\mathrm{d}k.
\end{aligned}
\right.
\end{equation}
The first one generates the Kirkwood-Dirac distribution \cite{Kirkwood1933,Dirac1945}, the case of \(\alpha=0\) for the third one generates the Margenau-Hill distribution \cite{Margenau1961}, the fourth one corresponds to the Wigner-Weyl transform \cite{Wigner1932,Weyl1927} and the fifth one corresponds to the Born-Jordan quantization \cite{Born1925}.
Hereafter, we write first one and third one as \(\hat{\#}^{\mathrm{KD}}_{(A,B)}\) and \(\hat{\#}^{\alpha}_{(A,B)}\) and denote \(\#^{\alpha}_{(A,B)}:=\mathscr{F}^{-1}\hat{\#}^{\alpha}_{(A,B)}\) and \(\#^{\mathrm{KD}}_{(A,B)}:=\mathscr{F}^{-1}\hat{\#}^{\mathrm{KD}}_{(A,B)}\), respectively.

We note that the definitions of QJSDs involve arbitrariness due to the non-commutativity of quantum observables and in the case where every pair of observables strongly commutes, they reduce to the standard joint spectral measure.

By using QJSDs, QJPs and quantizations of multivariable functions are defined as an generalization of the Born rule and the functional calculus, respectively, as follows \cite{Lee2017,Lee2018}.
\begin{dfn}[Quasi-joint probability distribution]
Given a QJSD \(\#_{\bm{A}}(\bm{a})\), the quasi-joint probability distribution for a tuple of observables \(\bm{A}=(A_1,\cdots, A_n)\) in a state \(\rho\) is given as
\begin{equation}
P_{\rho}^{\#_{\bm{A}}}(\bm{a}):=\mathrm{tr}[\rho\#_{\bm{A}}(\bm{a})].
\end{equation}
\end{dfn}
\begin{dfn}[Quantization]
Given a QJSD \(\#_{\bm{A}}(\bm{a})\), the quantization of a multivariable function \(f(\bm{a})\) of a tuple of observables \(\bm{A}=(A_1,\cdots, A_n)\) is given as
\begin{equation}
f(\bm{A})^{\#_{\bm{A}}}:=\int_{\mathbb{R}^n} f(\bm{a})\#_{\bm{A}}(\bm{a})\;\mathrm{d}\bm{a}.
\end{equation}
\end{dfn}
For simplicity, we write \(P_{\rho}^{\mathrm{KD}}:=P_{\rho}^{\#^{\mathrm{KD}}_{(A,B)}}\), \(P_{\rho}^{\alpha}:=P_{\rho}^{\#^{\alpha}_{(A,B)}}\), \(f(\bm{A})^{\mathrm{KD}}:=f(\bm{A})^{\#_{\bm{A}}^{\mathrm{KD}}}\), and \(f(\bm{A})^{\alpha}:=f(\bm{A})^{\#_{\bm{A}}^{\alpha}}\).

Given a ``basis" set of observables \(\bm{A}\), the quasi-classicalization of quantum states \(\rho\mapsto P_{\rho}^{\#_{\bm{A}}}(\bm{a})\) and the quantization of classical observables \(f(\bm{a})\mapsto f(\bm{A})^{\#_{\bm{A}}}\) are dualistically understood:
It is equivalent whether to compute ``quasi-expectation value" of an observable \(f(\bm{a})\) in a state \(\rho\) by quasi-classicalizing the state or to compute it by quantizing the observable
\begin{equation}
\int_{\mathbb{R}^n} f(\bm{a})P_{\rho}^{\#_{\bm{A}}}(\bm{a})\;\mathrm{d}\bm{a}
=\mathrm{tr}[\rho f(\bm{A})^{\#_{\bm{A}}}]
=:\langle f(\bm{a}) \rangle_{\rho}^{\#_{\bm{A}}},
\end{equation}
for a fixed generator \(\#_{\bm{A}}\) of the representation (see Fig.\ref{duality of quasi-classicalization and quantization}).
\;\\

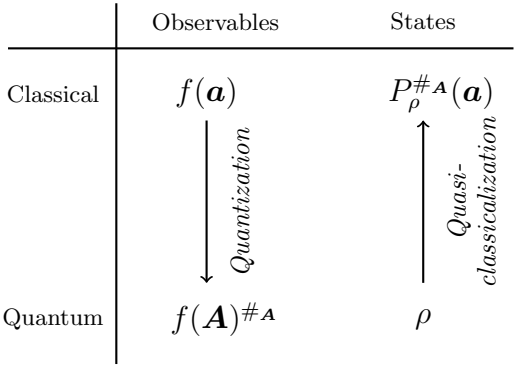
\begin{figure}[H]
\centering
\begin{tikzpicture}[scale=1.2]
\draw [-,thick] (-3.2,2.5) -- (2.4,2.5);
\draw [-,thick] (-2,3) -- (-2,-1);
\draw [->,thick] (-1,1.7) -- (-1,-0.1);
\draw [->,thick] (1.4,-0.1) -- (1.4,1.7);
\draw (-0.9,2.8) node{Observables};
\draw (1.4,2.8) node{States};
\draw (-1,2) node{\large{\(f(\bm{a})\)}};
\draw (-0.8,-0.5) node{\large{\(f(\bm{A})^{\#_{\bm{A}}}\)}};
\draw (1.6,2) node{\large{\(P_{\rho}^{\#_{\bm{A}}}(\bm{a})\)}};
\draw (1.4,-0.5) node{\large{\(\rho\)}};
\draw (-2.7,2) node{Classical};
\draw (-2.7,-0.5) node{Quantum};
\draw (-0.6,0.8) node[rotate=90]{\normalsize{\textit{Quantization}}};
\draw (1.8,0.8) node[rotate=90]{\normalsize{\textit{Quasi-}}};
\draw (2.1,0.8) node[rotate=90]{\normalsize{\textit{classicalization}}};
\end{tikzpicture}
\caption{The duality of quasi-classicalization and quantization as an adjoint operations.}
\label{duality of quasi-classicalization and quantization}
\end{figure}

Since \(\mathrm{tr}[\rho f(\bm{A})^{\#_{\bm{A}}}]\) in the case of \(f_0(\bm{a})=\Pi_{i=1}^{n} a_i\) are exactly what  we call algebraic correlations, we can regard \(P_{\rho}^{\#_{\bm{A}}}(\bm{a})\) as the underlying distribution that generates the algebraic correlation.
Indeed, we can show that the quantization \(f_0(a,b)^{\#}\) of the two-point correlations \(f_0(a,b)=ab\) is restricted to simple forms \( A\circ_{\alpha} B\) as following (mathematically not rigorous) discussions.
First, we observe
\begin{align}
f_0(a,b)^{\#}
&:=\int ab \#(a,b) \,\mathrm{d}a\mathrm{d}b\notag\\
&=\frac{1}{2\pi} \int ab \qty(\int e^{isa}e^{itb}\hat{\#}(s,t) \,\mathrm{d}s\mathrm{d}t)\,\mathrm{d}a\mathrm{d}b\notag\\
&=-\frac{1}{2\pi} \int \int e^{isa}e^{itb}\pdv{s}\pdv{t}\hat{\#}(s,t) \,\mathrm{d}s\mathrm{d}t\,\mathrm{d}a\mathrm{d}b\notag\\
&=-\int \delta(s)\delta(t)\pdv{s}\pdv{t}\hat{\#}(s,t) \,\mathrm{d}s\mathrm{d}t\notag\\
&=-\left. \qty(\pdv{s}\pdv{t}\hat{\#}(s,t))\right|_{s=t=0}.
\end{align}
Since
\begin{equation}
\begin{aligned}
&\left. \qty(\pdv{s}\pdv{t}e^{-i\alpha_1sA}e^{-i\beta_1tB}\cdots e^{-i\alpha_nsA}e^{-i\beta_ntB}) \right|_{s=t=0}\\
&=-\qty(\sum_{i\leq j}\alpha_i\beta_j AB+\sum_{i>j}\alpha_i\beta_j BA),
\end{aligned}
\end{equation}
\(f_0(a,b)^{\#}\) can be written as \( A\circ_{\alpha} B\) for some \(\alpha\in \mathbb{R}\), at least for the representations generated by the hashed operators that can be written as a well-converging limit of the affine combination of the above simple forms.
Similar discussion holds for \(f_0(\bm{a})=\Pi_{i=1}^{n} a_i\).
We note that, since \((AB)^\dagger=BA\), \(\mathrm{Re}\langle  A\circ_{\alpha} B\rangle_{\rho}\) do not depend on \(\alpha\in\mathbb{R}\), and thus the real part of the two-point algebraic correlation is the same for all representations.
However, this does not hold for more than three point correlations.
For example, a three-point algebraic correlation defined by different orderings of quantization \(\langle\sigma_x\sigma_x\sigma_z\rangle_{\ket{z+}}=\langle \sigma_z\rangle_{\ket{z+}}=1\) and \(\langle\sigma_x\sigma_z\sigma_x\rangle_{\ket{z+}}=\langle -\sigma_z\rangle_{\ket{z+}}=-1\) disagrees.

We also introduce the notion of \textit{quasi-conditional expectations} \cite{Lee2017,Lee2018} as a quantum extension of the standard conditional expectation:
\begin{equation}
\begin{aligned}
&\mathbb{E}_{\rho}^{\#_{\bm{A}}}[A_i\,|\,(a_1,\cdots,a_{i-1},a_{i+1},\cdots,a_n)]\\
&\;\;\;\;\;\;\;\;\;\;:=\frac{\int_{\mathbb{R}} a_i P_{\rho}^{\#_{\bm{A}}}(a_1,\cdots,a_n)\; \mathrm{d}{a_i}}{\int_{\mathbb{R}} P_{\rho}^{\#_{\bm{A}}}(a_1,\cdots,a_n)\; \mathrm{d}{a_i}}.
\end{aligned}
\end{equation}
This includes Aharonov's \textit{weak value} \cite{Aharonov1988,Aharonov1990,Dressel2014} as a special case.
It can be understood as {quasi-conditional expectations} of the Kirkwood-Dirac representation \cite{Steinberg1995,Hosoya2011,Morita2013}:
\begin{equation}
\begin{aligned}
{A^{\mathrm{w}}}_{\rho}(b)
:=\mathbb{E}_{\rho}^{\mathrm{KD}}[A\,|\,b]
&=\frac{\int_{\mathbb{R}} a P_{\rho}^{\mathrm{KD}}(b,a)\;\mathrm{d}a}{P_{\rho}^B(b)}\\
&=\frac{\bra{b}A\ket{\psi}}{\bra{b}\ket{\psi}}.
\end{aligned}
\end{equation}
Here, the last equality holds for the case of non-degenerated eigenvalue \(b\) of the observable \(B\) and pure states \(\rho=\ket{\psi}\bra{\psi}\).
Aharonov's weak value was introduced as a ``value of an observable between the initial and final states" and is known to be measurable by \textit{weak measurement} \cite{Aharonov1988,Ritchie1991} using post-selection, which is known to be applicable to quantum metrology \cite{Hosten2008,Dixon2009,Lundeen2011,Lee2014,Mori2019,Xu2020,Arvidsson-Shukur2020}.

We remark that, in general, the support of the distribution and the spectrum of the observables in consideration are in non-trivial relation \cite{Hofmann2014,Umekawa2024}, presenting a form of non-classicality of the quantum system.
Quasi-joint probability distributions can take non-zero values outside the spectrum of observables, the values that the observables can take, \textit{i.e.}, there exists a value \(b\) for which \(P_{\rho}(b)=0\) but \(P_{\rho}^{\#_{(A,B)}}(a,b)\neq 0\) for some \(a\) and \(\rho\), other than Kirkwood-Dirac type representations \(\#^{\alpha}_{(A,B)}\).
This leads to the quasi-conditional expectation not being well-defined at \(b\),

\section{Relation between operational and algebraic (quasi-) probability distributions}
\label{Relation between operational and algebraic (quasi-) probability distributions}
In this section, we present both upper and lower bounds for the difference among underlying operational and algebraic (quasi-)probability distributions that reproduce correlation functions.

First, we give an algebraic description of the operational probability distribution:
\begin{prop}
\label{algebraic description of operational probability}
\begin{equation}
P_\rho^{A\to B}(a\to b)=P_{\Lambda_A(\rho)}^{\alpha}(a,b).
\end{equation}
\end{prop}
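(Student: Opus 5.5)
The plan is to mirror the proof of Prop.~\ref{algebraic description of operational correlation}, working at the level of the quasi-joint spectral distribution rather than the correlation. First I will make \(\#^{\alpha}_{(A,B)}\) explicit for discrete spectra. Insert the spectral decompositions
\[
e^{-isA}=\sum_a e^{-isa}P_A(a), \qquad e^{-itB}=\sum_b e^{-itb}P_B(b)
\]
into the hashed operator \(\hat{\#}^{\alpha}_{(A,B)}\) and take the inverse Fourier transform term by term. This should give
\[
\#^{\alpha}_{(A,B)}(a',b')=\sum_{a,b}\Bigl(\tfrac{1+\alpha}{2}P_A(a)P_B(b)+\tfrac{1-\alpha}{2}P_B(b)P_A(a)\Bigr)\delta(a'-a)\delta(b'-b),
\]
up to the paper's Fourier normalization. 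Hence, with the identification of point masses at \((a,b)\),
\[
P^{\alpha}_{\sigma}(a,b)=\tfrac{1+\alpha}{2}\mathrm{tr}[\sigma P_A(a)P_B(b)]+\tfrac{1-\alpha}{2}\mathrm{tr}[\sigma P_B(b)P_A(a)]
\]
for any state \(\sigma\). The case \(\alpha=1\) is the Kirkwood--Dirac distribution.

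Next I set \(\sigma=\Lambda_A(\rho)\). I use the identity already employed in the proof of Prop.~\ref{algebraic description of operational correlation}, namely
\[
\Lambda_A(\rho)P_A(a)=P_A(a)\Lambda_A(\rho)=\Pi_A(a)(\rho),
\]
which follows from \(P_A(a)P_A(a')=\delta_{aa'}P_A(a)\). From it,
\[
\mathrm{tr}[\Lambda_A(\rho)P_A(a)P_B(b)]=\mathrm{tr}[\Pi_A(a)(\rho)P_B(b)]=P^{A\to B}_\rho(a\to b).
\]
By cyclicity of the trace, \(\mathrm{tr}[\Lambda_A(\rho)P_B(b)P_A(a)]=\mathrm{tr}[P_A(a)\Lambda_A(\rho)P_B(b)]\), and this equals the same quantity. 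Since the two coefficients sum to \(1\), the affine combination collapses to \(P^{A\to B}_\rho(a\to b)\) for every \(\alpha\), which is the claim.

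The main obstacle is the first step: making the inverse Fourier transform of \(\hat{\#}^{\alpha}_{(A,B)}\) precise. For compact observables with infinitely many eigenvalues the sums are infinite, and the result is a distribution (a sum of Dirac masses), not a function, so \(P^{\alpha}_\rho(a,b)\) must be read as the weight at the atom \((a,b)\). I would handle this by computing the Fourier transform weakly, pairing against test functions. Then \(\mathrm{tr}[\sigma\,\cdot\,]\) can be exchanged with the strongly convergent spectral sums because \(\sigma\) is trace class. This yields the atomic form of the measure, and the algebraic identity above then finishes the argument.
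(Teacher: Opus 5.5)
Your proof is correct and is essentially the paper's: both rest on \(\Pi_A(a)(\rho)=\Lambda_A(\rho)P_A(a)\), which identifies \(P^{A\to B}_\rho(a\to b)\) with the Kirkwood--Dirac weight \(P^{\mathrm{KD}}_{\Lambda_A(\rho)}(a,b)\). The only difference is the reversed ordering. The paper handles it by noting this weight is real. You instead use \(P_A(a)\Lambda_A(\rho)=\Pi_A(a)(\rho)\) and cyclicity directly, as in the proof of Prop.~\ref{algebraic description of operational correlation}. Your route avoids the paper's real/imaginary decomposition of \(P^{\alpha}\), which is stated there as \(\mathrm{Re}+\alpha\,\mathrm{Im}\) but for the \((1\pm\alpha)/2\) weights is actually \(\mathrm{Re}\,P^{\mathrm{KD}}+i\alpha\,\mathrm{Im}\,P^{\mathrm{KD}}\); the slip is harmless because the imaginary part vanishes.
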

\begin{proof}
We find that
\begin{align*}
P_\rho^{A\to B}(a\to b)
&=\mathrm{tr}[\Pi_A(a)(\rho)P_B(b)]\\
&=\mathrm{tr}\qty[\Pi_A(\rho)P_A(a)P_B(b)]\\
&=P_{\Lambda_A(\rho)}^{KD}(a,b).
\end{align*}
Since \(P_\rho^{A\to B}(a\to b)\in\mathbb{R}\), it follows that \(\mathrm{Im}[P_{\Lambda_A(\rho)}^{KD}(a,b)]=0\).
Thus, we obtain 
\begin{align*}
P_{\Lambda_A(\rho)}^{\alpha}(a,b)&=\mathrm{Re}[P_{\Lambda_A(\rho)}^{KD}(a,b)]+\alpha\,\mathrm{Im}[P_{\Lambda_A(\rho)}^{KD}(a,b)]\\
&=P_\rho^{A\to B}(a\to b)
\end{align*}
for arbitrarily \(\alpha\in\mathbb{R}\).
\end{proof}

By using algebraic discription of the operational probability, we show that the difference among operational and algebraic (quasi-)probabilities are bounded by the invasiveness and non-commutativity of quantum observables. 
\begin{thm}
\label{upper bound of difference between operational and algebraic probabilities}
For \(\alpha\in[0,1]\),
\begin{equation}
\begin{aligned}
&|P_{\rho}^{A\to B}(a\to b)-P_{\rho}^{\alpha}(a,b)|\\
&\leq \mathrm{min}
\left\{
\begin{aligned}
&\norm{[P_A(a),\;P_B(b)]},\\
&\norm{P_A(a)\circ_{\alpha}P_B(b)}
\end{aligned}
\right\}
\mathrm{Inv}_A(\rho).
\end{aligned}
\end{equation}
\end{thm}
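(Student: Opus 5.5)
By Proposition~\ref{algebraic description of operational probability}, I would first write the operational probability as an algebraic quasi-probability in the measured state, and then measure its deviation against two different test operators. Throughout, fix $a,b$ and write $P:=P_A(a)$, $Q:=P_B(b)$. For the Margenau--Hill type representation, the quasi-probability is the expectation of the ordered product of spectral projections, so I will use
\begin{equation*}
P_\rho^{\alpha}(a,b)=\mathrm{tr}[\rho\,(P\circ_\alpha Q)]
\end{equation*}
(this is what $\hat{\#}^{\alpha}_{(A,B)}$ generates after the inverse Fourier transform). Combining this with Proposition~\ref{algebraic description of operational probability} gives
\begin{equation*}
P_{\rho}^{A\to B}(a\to b)-P_{\rho}^{\alpha}(a,b)=\mathrm{tr}\bigl[(\Lambda_A(\rho)-\rho)\,X\bigr],\qquad X:=P\circ_\alpha Q .
\end{equation*}

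For the second entry of the minimum, I would argue exactly as in Theorem~\ref{upper bound of difference between operational and algebraic correlations}. H\"older's inequality bounds the right-hand side by $\norm{X}\,\norm{\Lambda_A(\rho)-\rho}_1=\norm{P\circ_\alpha Q}\,\mathrm{Inv}_A(\rho)$.

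For the commutator entry, the key observation is that $\Lambda_A(\rho)-\rho$ annihilates, under the trace, every operator fixed by $\Lambda_A^\dagger$. Indeed, $\mathrm{tr}[(\Lambda_A(\rho)-\rho)Y]=\mathrm{tr}[\rho\,\delta_A(Y)]$, which vanishes whenever $\Lambda_A^\dagger(Y)=Y$. I would take $Y:=PQP$. Since $P$ is a spectral projection of $A$, we have $P_A(a')PQPP_A(a')=\delta_{a a'}PQP$, so $Y$ is block diagonal in the $A$-basis and $\Lambda_A^\dagger(Y)=Y$. Hence the difference equals $\mathrm{tr}[(\Lambda_A(\rho)-\rho)(X-Y)]$, and H\"older gives the bound $\norm{X-Y}\,\mathrm{Inv}_A(\rho)$.

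It then remains to compute $X-Y$ and bound its norm by $\norm{[P,Q]}$. This is the step needing care, mainly for signs and for the use of $\alpha\in[0,1]$. We have
\begin{equation*}
PQ-PQP=PQ(1-P)=P[Q,P],\qquad QP-PQP=(1-P)QP=[Q,P]P .
\end{equation*}
Substituting these into $X-Y$ gives
\begin{equation*}
X-Y=\alpha\,P[Q,P]+(1-\alpha)\,[Q,P]P .
\end{equation*}
Since $\norm{P}\le 1$, this yields $\norm{X-Y}\le(|\alpha|+|1-\alpha|)\norm{[P,Q]}=\norm{[P,Q]}$ for $\alpha\in[0,1]$.

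Both bounds hold simultaneously, so the difference is bounded by the minimum of the two times $\mathrm{Inv}_A(\rho)$, which is the claim. The only genuine obstacle is choosing the right $\Lambda_A^\dagger$-invariant operator $Y$ to subtract. The pinched operator $PQP$ is the natural candidate, since it is the part of $X$ that sees only the diagonal of $\rho$ in the $A$-basis.
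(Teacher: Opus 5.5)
Your proof is correct, and its key step is the same as the paper's. The paper also discards $\mathrm{tr}[(\Lambda_A(\rho)-\rho)PQP]$, phrased there as $P_A(a)(\Lambda_A(\rho)-\rho)P_A(a)=0$; this is your observation that $PQP$ is $\Lambda_A^\dagger$-invariant. Your bound via $\norm{P\circ_\alpha Q}$ is the same H\"older estimate as the paper's.

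The difference is how $\alpha$ is handled in the commutator entry. The paper first reduces to Kirkwood--Dirac via $|P_\rho^{A\to B}(a\to b)-P^\alpha_\rho(a,b)|\le|P^{\mathrm{KD}}_{\Lambda_A(\rho)}(a,b)-P^{\mathrm{KD}}_\rho(a,b)|$, and leaves this step unexplained. It holds because the operational value $w$ is real and, with $z=P^{\mathrm{KD}}_\rho(a,b)$, $P^\alpha_\rho(a,b)=\alpha z+(1-\alpha)\bar z=\mathrm{Re}\,z+i(2\alpha-1)\,\mathrm{Im}\,z$ with $|2\alpha-1|\le 1$. The paper then splits the single operator $P\cdot PQ=P[P,Q]+PQP$. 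You work with $P\circ_\alpha Q$ directly and use $\alpha\in[0,1]$ only for $|\alpha|+|1-\alpha|=1$, which is more self-contained.

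There is a harmless sign slip: $PQ-PQP=PQ(1-P)=P[P,Q]=-P[Q,P]$. Only norms enter, so nothing changes.

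Your route also gives more than the statement. With $Z:=PQ(1-P)$ you have $X-Y=\alpha Z+(1-\alpha)Z^\dagger$. This operator is block off-diagonal with respect to $P\mathcal{H}\oplus(1-P)\mathcal{H}$, so $\norm{X-Y}=\max\{\alpha,1-\alpha\}\norm{Z}$. Since $[P,Q]=Z-Z^\dagger$, also $\norm{Z}=\norm{[P,Q]}$. Hence the commutator term improves to $\max\{\alpha,1-\alpha\}\norm{[P,Q]}$, which halves it in the symmetric case $\alpha=1/2$. The paper's reduction to the $\alpha$-independent Kirkwood--Dirac quantity cannot see this factor.
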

\begin{proof}
By Prop.\ref{algebraic description of operational probability}, we find
\begin{align*}
&|P_{\rho}^{A\to B}(a\to b)-P_{\rho}^{\alpha}(a,b)|\\
&\;\;\;\;\;\;=\qty|P_{\Lambda_A(\rho)}^{\alpha}(a,b)-P_{\rho}^{\alpha}(a,b)|\\
&\;\;\;\;\;\;\leq \norm{\Lambda_A(\rho)-\rho}_{1}\norm{P_A(a)\circ_{\alpha}P_B(b)}.
\end{align*}
Since \(P_A(a)(\Lambda_A(\rho)-\rho)P_A(a)=0\), we also find
\begin{align}
&|P_{\rho}^{A\to B}(a\to b)-P_{\rho}^{\alpha}(a,b)\notag|\\
&\;\;\;\;\;\leq\Bigl|P_{\Lambda_A(\rho)}^{\mathrm{KD}}(a,b)-P_{\rho}^{\mathrm{KD}}(a,b)\Bigr|\notag\\
&\;\;\;\;\;=\qty|\mathrm{tr}\qty[\Bigl(\Lambda_A(\rho)-\rho\Bigr)P_A(a)P_A(a)P_B(b)]|\notag\\
\begin{split}
&\;\;\;\;\;=\biggl|\mathrm{tr}\qty[\Bigl(\Lambda_A(\rho)-\rho\Bigr)P_A(a)[P_A(a),\,P_B(b)]]\\
&\;\;\;\;\;\quad+\mathrm{tr}\qty[P_A(a)\Bigl(\Lambda_A(\rho)-\rho\Bigr)P_A(a)P_B(b)]\biggr|\\
\end{split}\notag\\
&\;\;\;\;\;\leq \norm{\Lambda_A(\rho)-\rho}_{1}\norm{P_A(a)[P_A(a),\,P_B(b)]}\notag\\
&\;\;\;\;\;\leq \norm{\Lambda_A(\rho)-\rho}_{1}\norm{[P_A(a),\,P_B(b)]}.\notag
\end{align}
\end{proof}
The upper bound of differences between two orders of operational probabilities follows straightforwardly.
\begin{cor}
\label{upper bound of difference between two operational probabilities}
For \(\alpha\in[0,1]\),
\begin{equation}
\begin{aligned}
&|P_{\rho}^{A\to B}(a\to b)-P_{\rho}^{\alpha}(a,b)|\\
&\leq \mathrm{min}
\left\{
\begin{aligned}
&\norm{[P_A(a),\;P_B(b)]},\\
&\norm{P_A(a)\circ_{\alpha}P_B(b)}
\end{aligned}
\right\}
\Bigl(\mathrm{Inv}_A(\rho)+\mathrm{Inv}_B(\rho)\Bigr).
\end{aligned}
\end{equation}
\end{cor}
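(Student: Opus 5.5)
As printed, the left-hand side coincides with that of Theorem~\ref{upper bound of difference between operational and algebraic probabilities}. In that literal form the corollary follows at once from the theorem, since $\mathrm{Inv}_B(\rho)\geq 0$. The intended content, by analogy with Corollary~\ref{upper bound of difference between two operational correlations}, is surely the bound on $|P_{\rho}^{A\to B}(a\to b)-P_{\rho}^{B\to A}(b\to a)|$. I will prove that version, and the literal version is covered along the way.

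The plan is a triangle inequality through the algebraic quasi-probability $P^{\alpha}_{\rho}(a,b)$:
\begin{equation*}
|P_{\rho}^{A\to B}(a\to b)-P_{\rho}^{B\to A}(b\to a)|
\leq |P_{\rho}^{A\to B}(a\to b)-P^{\alpha}_{\rho}(a,b)|+|P^{\alpha}_{\rho}(a,b)-P_{\rho}^{B\to A}(b\to a)|.
\end{equation*}
The first term is bounded directly by Theorem~\ref{upper bound of difference between operational and algebraic probabilities}. It is at most $\min\{\norm{[P_A(a),P_B(b)]},\norm{P_A(a)\circ_{\alpha}P_B(b)}\}\,\mathrm{Inv}_A(\rho)$.

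For the second term I apply the same theorem with the roles of $A$ and $B$ exchanged. The key observation is the identity
\begin{equation*}
P_B(b)\circ_{\beta}P_A(a)=\beta P_B(b)P_A(a)+(1-\beta)P_A(a)P_B(b)=P_A(a)\circ_{1-\beta}P_B(b).
\end{equation*}
It shows that the $(B,A)$ quasi-distribution with parameter $\beta$ equals $P^{1-\beta}_{\rho}(a,b)$. Choosing $\beta=1-\alpha\in[0,1]$, the swapped theorem gives
\begin{equation*}
|P_{\rho}^{B\to A}(b\to a)-P^{\alpha}_{\rho}(a,b)|\leq \min\{\norm{[P_B(b),P_A(a)]},\norm{P_A(a)\circ_{\alpha}P_B(b)}\}\,\mathrm{Inv}_B(\rho).
\end{equation*}
Since $\norm{[P_B(b),P_A(a)]}=\norm{[P_A(a),P_B(b)]}$, both terms carry the same minimum factor. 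Adding the two bounds then yields the claim.

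The main point needing care is this reindexing $\beta\leftrightarrow 1-\alpha$. It is what makes both terms share the same $\circ_\alpha$ norm, and the restriction $\alpha\in[0,1]$ is preserved under $\alpha\mapsto 1-\alpha$, so the theorem's hypothesis still holds. The rest is a direct application of the earlier results.
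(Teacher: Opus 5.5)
Your proposal is correct and follows the paper's own argument: the triangle inequality through $P^{\alpha}_{\rho}(a,b)$, with Theorem~\ref{upper bound of difference between operational and algebraic probabilities} applied once to $(A,B)$ and once to $(B,A)$, and with the left-hand side rightly read as $|P_{\rho}^{A\to B}(a\to b)-P_{\rho}^{B\to A}(b\to a)|$. The paper leaves implicit the reindexing $P_B(b)\circ_{1-\alpha}P_A(a)=P_A(a)\circ_{\alpha}P_B(b)$, which keeps the parameter in $[0,1]$ and gives both terms the same minimum factor. You handle it correctly.
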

\begin{proof}
The statement follows from Thm.\ref{upper bound of difference between operational and algebraic probabilities} and the triangle inequality.
\end{proof}
The upper bounds of the differences among (quasi-)joint probability distributions derived here are given by the product of two factors. 
While the second factor evaluates how the first measurement disturbs the initial state, the first factor that involves a minimization term represents the relation between two observables under consideration.
While the commutator term \(\norm{[P_A(a),\;P_B(b)]}\) can be interpreted as evaluating how incompatible they are, the product term \(\norm{P_A(a)\circ_{\alpha}P_B(b)}\) can be roughly interpreted as evaluating how large ``the probability for the value of \(A\) to be \(a\) and \(B\) to be \(b\) at the same time" is.

We also provide a lower bound of the difference among operational and algebraic (quasi-)probability distributions, which can be interpreted as a kind of uncertainty relation.
\begin{prop}
\label{lower bound of difference between operational and algebraic probabilities}
\begin{equation}
\norm{P_{\rho}^{A\to B}-P_{\rho}^{\#}}_{\mathrm{TV}}\geq \Delta_A(B;\rho).
\end{equation}
\end{prop}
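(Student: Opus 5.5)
The plan is to test the signed (or complex) measure \(\mu:=P_{\rho}^{A\to B}-P_{\rho}^{\#}\) against functions that depend only on the second variable \(b\). The idea is to use the duality between the total variation norm and the supremum norm,
\begin{equation*}
\Bigl|\int g\,\mathrm{d}\mu\Bigr|\leq \norm{g}_{\infty}\norm{\mu}_{\mathrm{TV}},
\end{equation*}
with \(g(a,b):=f(b)\) for \(f\in C_0(\sigma(B))\). I will first extend \(f\) to a bounded continuous function on \(\mathbb{R}\) with the same supremum norm (Tietze), so that \(g\) is a legitimate test function. Dividing by \(\norm{g}_\infty\) and taking the supremum over \(f\) should then reproduce the definition of \(\Delta_A(B;\rho)\).

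The key steps, in order, are the two marginal computations.

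For the operational part, summing over \(a\) and using \(\sum_a \Pi_A(a)=\Lambda_A\) gives
\begin{equation*}
\sum_{a} P_{\rho}^{A\to B}(a\to b)=\mathrm{tr}[\Lambda_A(\rho)P_B(b)].
\end{equation*}
Hence
\begin{equation*}
\int f(b)\,\mathrm{d}P_{\rho}^{A\to B}=\mathrm{tr}[\Lambda_A(\rho) f(B)]=\mathrm{tr}[\rho\,\Lambda_A^{\dagger}(f(B))].
\end{equation*}

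For the algebraic part, I invoke the defining marginal property of QJSDs: the \(b\)-marginal of \(\#_{(A,B)}\) is the spectral measure of \(B\). Equivalently, the quantization of a function of \(b\) alone is the functional calculus, \(f(b)^{\#}=f(B)\). This can be checked from the hashed operator, since \(\hat{\#}(0,t)=e^{-itB}\) for every representation listed. It gives
\begin{equation*}
\int f(b)\,\mathrm{d}P_{\rho}^{\#}=\mathrm{tr}[\rho f(B)].
\end{equation*}
Subtracting the two marginal identities yields
\begin{equation*}
\int g\,\mathrm{d}\mu=\langle\delta_A(f(B))\rangle_\rho .
\end{equation*}

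To finish, note that \(\norm{g}_\infty=\sup_{b}|f(b)|\). Only values on \(\sigma(B)\) matter, because \(g\) enters only through the \(b\)-marginals, which are both supported on \(\sigma(B)\). By the spectral theorem this supremum equals \(\norm{f(B)}\). Combining with the duality inequality gives
\begin{equation*}
\frac{|\langle\delta_A(f(B))\rangle_\rho|}{\norm{f(B)}}\leq\norm{\mu}_{\mathrm{TV}}
\end{equation*}
for every \(f\), and taking the supremum over \(f\) gives the claimed bound.

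The main obstacle I expect is technical: justifying the duality inequality and the marginal identity for a general QJP \(P_\rho^{\#}\). Such a QJP is complex-valued and, for some representations (e.g. Wigner--Weyl or Born--Jordan), may be only a distribution with support outside \(\sigma(A)\times\sigma(B)\). My first approach would be to restrict to representations for which \(P^{\#}_\rho\) is a complex measure of finite total variation; otherwise the left-hand side is \(+\infty\) and the statement holds trivially. Within that restriction, I would use the standard bound for complex measures \(|\int g\,\mathrm{d}\mu|\le\norm{g}_\infty|\mu|(\mathbb{R}^2)\). The marginal identity would be justified by the Fourier-transform property \(\hat{\#}(0,t)=e^{-itB}\).
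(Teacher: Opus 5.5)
Your proposal is correct and follows essentially the paper's own argument. The paper likewise bounds $\norm{P_{\rho}^{A\to B}-P_{\rho}^{\#}}_{\mathrm{TV}}$ from below by testing against bounded functions and restricting to $f(a,b)=f(b)$ with $f\in C_0(\sigma(B))$, which yields $|\mathrm{tr}[f(B)(\Lambda_A(\rho)-\rho)]|/\norm{f(B)}$; the only difference is that it writes the operational side as $P^{\mathrm{KD}}_{\Lambda_A(\rho)}$ rather than computing the $b$-marginal directly, and your handling of the extension of $f$ off $\sigma(B)$ (extension by zero would also do, since only bounded measurable test functions are needed) and of the finite-total-variation caveat fills in technicalities the paper leaves implicit.
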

\begin{proof}
We find that
\begin{align}
&\norm{P_{\rho}^{A\to B}-P_{\rho}^{\#}}_{\mathrm{TV}}\notag\\
&\;\;\;\;\;\geq \frac{\qty|\int f(a,b)\;\mathrm{d}\qty(P_{\rho}^{A\to B}(a\to b)-P_{\rho}^{\#}(a,b))|}{\norm{f}_{\infty}}\notag\\
&\;\;\;\;\;= \frac{\qty|\int f(a,b)\;\mathrm{d}\qty(P_{\Lambda_A(\rho)}^{KD}(a,b)-P_{\rho}^{\#}(a,b))|}{\norm{f}_{\infty}}\notag\\
&\;\;\;\;\;=\frac{\qty|\langle f^{\mathrm{KD}}(A,B) \rangle_{\Lambda_A(\rho)}-\langle f^{\#}(A,B) \rangle_{\rho}|}{\norm{f}_{\infty}}\notag
\end{align}
for any measurable function \(f\).
By restricting \(f\) to \(f\in C_0(\sigma(B))\), we obtain
\begin{align}
\norm{P_{\rho}^{A\to B}-P_{\rho}^{\#}}_{\mathrm{TV}}
&\geq \frac{\qty|\langle f(B) \rangle_{\Lambda_A(\rho)}-\langle f(B) \rangle_{\rho}|}{\norm{f}_{\infty}}\notag\\
&\geq\frac{\qty|\mathrm{tr}\qty[f(B)\Bigl(\Lambda_A(\rho)-\rho\Bigr)]|}{\norm{f(B)}}\notag\\
&=\frac{\qty|\langle\delta_A(f(B))\rangle_\rho|}{\;\;\norm{f(B)}}.\notag
\end{align}
\end{proof}
The lower bound of the difference between two orders of operational probabilities follows straightforwardly.
\begin{thm}
\label{lower bound of difference between two operational probabilities}
\begin{equation}
\norm{P_{\rho}^{A\to B}-P_{\rho}^{B\to A}}_{\mathrm{TV}}
\geq\mathrm{max}\qty{\Delta_A(B;\rho),\,\Delta_B(A;\rho)}.
\end{equation}
\end{thm}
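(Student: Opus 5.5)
We argue along the lines of Prop.~\ref{lower bound of difference between operational and algebraic probabilities}: we test the signed measure \(P_{\rho}^{A\to B}-P_{\rho}^{B\to A}\) against functions of a single variable. By the duality between total variation and the sup norm, for any bounded measurable \(f\) on the joint outcome space,
\begin{equation*}
\norm{P_{\rho}^{A\to B}-P_{\rho}^{B\to A}}_{\mathrm{TV}}\geq\frac{\bigl|\int f\,\mathrm{d}(P_{\rho}^{A\to B}-P_{\rho}^{B\to A})\bigr|}{\norm{f}_{\infty}}.
\end{equation*}
We identify both distributions as measures on pairs \((a,b)\), so that \(P_{\rho}^{B\to A}(b\to a)\) is viewed as a function of \((a,b)\).

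\emph{First bound.} Take \(f(a,b)=g(b)\) with \(g\in C_0(\sigma(B))\). For the sequential measurement \(A\to B\), the \(b\)-marginal is \(\sum_a \mathrm{tr}[\Pi_A(a)(\rho)P_B(b)]=\mathrm{tr}[\Lambda_A(\rho)P_B(b)]\), so the integral gives \(\langle g(B)\rangle_{\Lambda_A(\rho)}\). For \(B\to A\), the \(b\)-marginal is the probability of the first outcome, and the later measurement of \(A\) does not affect it: \(\sum_a\mathrm{tr}[P_B(b)\rho P_B(b)P_A(a)]=\mathrm{tr}[\rho P_B(b)]\). So the integral is \(\langle g(B)\rangle_\rho\). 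The difference is therefore \(\mathrm{tr}[g(B)(\Lambda_A(\rho)-\rho)]=\langle\delta_A(g(B))\rangle_\rho\). Since \(\norm{g}_\infty=\norm{g(B)}\) for \(g\) on the spectrum, taking the supremum over \(g\) yields \(\Delta_A(B;\rho)\).

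\emph{Second bound.} Symmetrically, take \(f(a,b)=h(a)\) with \(h\in C_0(\sigma(A))\). Now the \(a\)-marginal of \(A\to B\) is the first-outcome probability \(\mathrm{tr}[\rho P_A(a)]\), and the \(a\)-marginal of \(B\to A\) is \(\mathrm{tr}[\Lambda_B(\rho)P_A(a)]\). The difference is \(\langle\delta_B(h(A))\rangle_\rho\), up to sign, and the supremum gives \(\Delta_B(A;\rho)\). Both lower bounds hold, so the maximum of the two does as well.

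The only point needing care is the no-signalling fact that the first measurement's marginal equals the Born rule, i.e.\ \(\sum_b P_B(b)=I\) by completeness. This should also be handled with a one-line remark on measurability and compactness, using that the spectra are discrete for compact \(A,B\) (away from \(0\)), so that \(C_0\) functions on the spectra are legitimate test functions. Apart from that, the argument is a direct reuse of the proof of Prop.~\ref{lower bound of difference between operational and algebraic probabilities}.
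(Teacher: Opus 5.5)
Your proof is correct and follows essentially the same route as the paper. Both arguments test \(P_{\rho}^{A\to B}-P_{\rho}^{B\to A}\) against single-variable functions \(f(a)\) and \(f(b)\) via the total-variation/sup-norm duality, and both rely on the first measurement's marginal being unaffected by the second. The paper phrases that last fact in Heisenberg form as \(\Lambda_A^{\dagger}(f(A))=f(A)\), while you phrase it through marginals and completeness. One small precision: your first bound uses \(\sum_a P_A(a)=I\) and your second uses \(\sum_b P_B(b)=I\), so both completeness relations enter, not only the one you cite.
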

\begin{proof}
We find that
\begin{align}
&\norm{P_{\rho}^{A\to B}-P_{\rho}^{B\to A}}_{\mathrm{TV}}\notag\\
&\;\;\;\;\;\geq \frac{\qty|\int f(a,b)\;\mathrm{d}\qty(P_{\rho}^{A\to B}(a\to b)-P_{\rho}^{B\to A}(b\to a))|}{\norm{f}_{\infty}}\notag\\
&\;\;\;\;\;=\frac{\qty|\langle f^{KD}(A,B) \rangle_{\Lambda_A(\rho)}-\langle f^{KD}(A,B) \rangle_{\Lambda_B(\rho)}|}{\norm{f}_{\infty}}.\notag
\end{align}
By taking \(f(a,b)=f(a)\), we obtain from \(\Lambda_A^{\dagger}(f(A))=f(A)\) that 
\begin{align}
\norm{P_{\rho}^{A\to B}-P_{\rho}^{B\to A}}_{\mathrm{TV}}
&\geq \frac{\qty|\langle f(A) \rangle_{\Lambda_A(\rho)}-\langle f(A) \rangle_{\Lambda_B(\rho)}|}{\norm{f(A)}}\notag\\
&=\frac{\qty|\mathrm{tr}\qty[\qty(f(A)-\Lambda_B^{\dagger}(f(A)))\rho]|}{\norm{f(A)}}\notag\\
&=\frac{\qty|\langle\delta_B(f(A))\rangle_\rho|}{\;\;\norm{f(A)}}.\notag
\end{align}
In the same way, we also find
\begin{equation*}
\norm{P_{\rho}^{A\to B}-P_{\rho}^{B\to A}}_{\mathrm{TV}}
\geq\frac{\qty|\langle\delta_A(f(B))\rangle_\rho|}{\;\;\norm{f(B)}}.
\end{equation*}
\end{proof}
The above results show that, in the case where an observable is disturbed in expectation by the measurement of another observable, definitions of their joint probability distributions must diverge.
We note that the lower bound given here is very tight.
Indeed, it always becomes equal in the case of the qubit system.
Detailed calculations for the example of the qubit system are presented in the Appendix \ref{example of qubit system}.

We remark that the coincidence among the definitions of their joint probability distributions does not imply state dependent commutativity given by neither \(\langle [A,B]\rangle_{\rho}=0\) nor \(\norm{[A,B]}_{\rho}=0\).
An explicit counterexample is presented in the Appendix \ref{example of qubit system}.

\section{A generalization to general measurements}
\label{A generalization to general measurements}
Here, we present a generalization of the above results for obsevables to the generic measurements.
We consider two quantum measurements \(M=\{M_m\}_m\) and \(N=\{N_n\}_n\) and investigate the difference among their operational and algebraic (quasi-)joint probability distributions.
Here, we use
\begin{equation}
P_{\rho}^{\alpha}(m, n):=\mathrm{tr}\Bigl[\rho \Bigl(\alpha E_M(m)E_N(n)+(1-\alpha)E_{N}(n)E_M(m)\Bigr)\Bigr]
\end{equation}
as algebraic quasi-joint probability distributions for \(M\) and \(N\), where \(E_M=\{E_M(m)\}_m\) and \(E_N=\{E_N(n)\}_n\) denote the POVMs corresponding to \(M\) and \(N\), respectively.

We show that the difference among operational and algebraic (quasi-)joint probability distributions for generic quantum measurements are bounded by a term that can be interpreted as evaluating non-repeatability of the measurement in addition to the same invasiveness term appearing in the previous inequalities.
\begin{prop}
\begin{equation}
\begin{aligned}
&|P_{\rho}^{M\to N}(m,n)-P_{\rho}^{\alpha}(m,n)|\\
&\;\;\;\;\leq \norm{E_M(m)\circ_{\alpha}E_N(n)}\,\mathrm{Inv}_A(\rho)+\norm{E_N(n)}\\
&\;\;\;\;\;\;\;\;\cdot\norm{M_m(\rho)\circ_{\alpha}\hspace{-7pt}\sum_{m': m'\neq m}\hspace{-7pt}E_M(m')+\hspace{-7pt}\sum_{m': m'\neq m}\hspace{-7pt}M_{m'}(\rho)\circ_{\alpha}E_M(m)}_{1}
\end{aligned}
\end{equation}
\end{prop}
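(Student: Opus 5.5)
We follow the strategy of Prop.~\ref{algebraic description of operational probability} and Thm.~\ref{upper bound of difference between operational and algebraic probabilities}. We write the operational probability $P_{\rho}^{M\to N}(m,n)=\mathrm{tr}[M_m(\rho)E_N(n)]$ as an algebraic quasi-probability evaluated in the post-measurement state $\Lambda_M(\rho)$, plus a correction term. For projective measurements this correction vanishes identically, because $\Pi_A(a)(\rho)=P_A(a)\Lambda_A(\rho)=\Lambda_A(\rho)P_A(a)$. For general instruments the analogous identity fails, and the correction is the ``non-repeatability'' term in the statement. We read the symbol $\mathrm{Inv}_A(\rho)$ there as $\mathrm{Inv}_M(\rho)$.

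\textbf{Step 1: algebraic identity.} Insert the completeness relation $\sum_{m'}E_M(m')=1$ on the appropriate side of $M_m(\rho)$. Expanding $\Lambda_M(\rho)=\sum_{m'}M_{m'}(\rho)$ then gives
\begin{align*}
\Lambda_M(\rho)E_M(m)&=M_m(\rho)-M_m(\rho)\sum_{m'\neq m}E_M(m')+\sum_{m'\neq m}M_{m'}(\rho)E_M(m),\\
E_M(m)\Lambda_M(\rho)&=M_m(\rho)-\sum_{m'\neq m}E_M(m')M_m(\rho)+\sum_{m'\neq m}E_M(m)M_{m'}(\rho).
\end{align*}
Take the $\alpha$-weighted combination of these two identities, using the cyclicity of the trace with $E_N(n)$. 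Writing $R_m(\rho)$ for the resulting remainder operator, this yields
\begin{equation*}
\mathrm{tr}\bigl[\Lambda_M(\rho)\,E_M(m)\circ_{\alpha}E_N(n)\bigr]=\mathrm{tr}[M_m(\rho)E_N(n)]+\mathrm{tr}\bigl[E_N(n)R_m(\rho)\bigr].
\end{equation*}
Here $R_m(\rho)$ is, up to the relative sign and ordering convention of the two pieces, the operator
\begin{equation*}
M_m(\rho)\circ_{\alpha}\sum_{m'\neq m}E_M(m')+\sum_{m'\neq m}M_{m'}(\rho)\circ_{\alpha}E_M(m)
\end{equation*}
appearing in the statement. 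One point must be checked carefully: which factor sits to the left in each $\circ_\alpha$ product, so that the $\alpha$ and $1-\alpha$ weights match those of $E_M(m)\circ_\alpha E_N(n)$. We expect this bookkeeping to be the main obstacle. We will fix the convention so that the displayed remainder is exactly $R_m(\rho)$. If a relative minus sign appears between the two pieces, we will use the triangle inequality on the two trace norms, or equivalently absorb the sign, since only $\norm{\cdot}_1$ enters the bound.

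\textbf{Step 2: estimate.} Subtract $P_\rho^{\alpha}(m,n)=\mathrm{tr}[\rho\,E_M(m)\circ_\alpha E_N(n)]$ from the identity of Step~1. This gives
\begin{equation*}
P_{\rho}^{M\to N}(m,n)-P_{\rho}^{\alpha}(m,n)=\mathrm{tr}\bigl[(\Lambda_M(\rho)-\rho)\,E_M(m)\circ_\alpha E_N(n)\bigr]-\mathrm{tr}\bigl[E_N(n)R_m(\rho)\bigr].
\end{equation*}
We bound the first term by H\"older's inequality, exactly as in the proof of Thm.~\ref{upper bound of difference between operational and algebraic correlations}. This gives $\norm{E_M(m)\circ_\alpha E_N(n)}\,\norm{\Lambda_M(\rho)-\rho}_1=\norm{E_M(m)\circ_\alpha E_N(n)}\,\mathrm{Inv}_M(\rho)$. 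We bound the second term by $\norm{E_N(n)}\,\norm{R_m(\rho)}_1$, again by H\"older. Adding the two bounds gives the claim.

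\textbf{Consistency check.} We will also verify that the remainder vanishes for L\"uders measurements of an observable $A$. In that case $M_m(\rho)E_M(m')=0=E_M(m')M_m(\rho)$ for $m'\neq m$, and $M_{m'}(\rho)E_M(m)=0$. Hence $R_m(\rho)=0$, and the bound reduces to Thm.~\ref{upper bound of difference between operational and algebraic probabilities}. This supports reading $R_m(\rho)$ as a measure of non-repeatability of $M$.
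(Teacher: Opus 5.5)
Your route is the paper's route: insert $P^{\alpha}_{\Lambda_M(\rho)}$, then apply two H\"older estimates. Your Step~1 identities are correct, and the ordering inside each $\circ_\alpha$ already matches the paper's convention. But those identities show that the relative minus sign really is there. Set
\[
X:=M_m(\rho)\circ_{\alpha}\sum_{m'\neq m}E_M(m'),\qquad Y:=\sum_{m'\neq m}M_{m'}(\rho)\circ_{\alpha}E_M(m).
\]
Then $\Lambda_M(\rho)\circ_\alpha E_M(m)=M_m(\rho)-X+Y$, so $R_m(\rho)=Y-X$. Your fallback, ``absorb the sign, since only $\norm{\cdot}_1$ enters,'' does not work. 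The trace norm ignores a global sign, but not the relative sign between two summands, so in general $\norm{X-Y}_1\neq\norm{X+Y}_1$. The triangle inequality only gives $\norm{X}_1+\norm{Y}_1$, which dominates $\norm{X+Y}_1$ and therefore does not imply the displayed bound. What your argument actually proves is the inequality with $\mathrm{Inv}_M(\rho)$ and $\norm{X-Y}_1$ (or $\norm{X}_1+\norm{Y}_1$).

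This gap cannot be closed, because the plus-sign version is false. Take a qubit, $\rho=I/2$, $\alpha=1/2$, and the instrument $M_j(\rho)=K_j\rho K_j^\dagger$ with $K_1=\ket{\phi}\bra{0}$, $K_2=\ket{\chi}\bra{1}$, $\ket{\phi}=c\ket{0}+s\ket{1}$, $\ket{\chi}=-s\ket{0}+c\ket{1}$. Then $E_M(1)=\ket{0}\bra{0}$, $E_M(2)=\ket{1}\bra{1}$, and $\Lambda_M(\rho)=\rho$, so the invasiveness term vanishes. For $m=1$ one finds $X+Y=\frac{s^2}{2}I$, so $\norm{X+Y}_1=s^2$. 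By contrast, $X-Y=\frac12\bigl(\ket{\phi}\bra{\phi}-\ket{0}\bra{0}\bigr)$ has eigenvalues $\pm|s|/2$. Let $E_N(n)$ be the projector onto the positive eigenvector. The left-hand side is then $|s|/2$ and the right-hand side is $s^2$, which is a violation for $0<|s|<1/2$.

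The paper's proof has exactly this defect. It correctly reaches $\mathrm{tr}\bigl[\bigl(M_m(\rho)-\sum_{m'}M_{m'}(\rho)\circ_\alpha E_M(m)\bigr)E_N(n)\bigr]$ and then bounds it by the plus-sign norm without justification. Your L\"uders consistency check cannot detect the sign error, since there $X=Y=0$.
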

\begin{proof}
We find
\begin{equation}
\begin{aligned}
&|P_{\Lambda_M(\rho)}^{\alpha}(m,n)-P_{\rho}^{\alpha}(m,n)|\notag\\
&\;\;\;\;\leq \norm{E_M(m)\circ_{\alpha}E_N(n)}_{\mathcal{B}(\mathcal{H})}\norm{\Lambda_M(\rho)-\rho}_{1}.\notag
\end{aligned}
\end{equation}
We also find
\begin{align}
&|P_{\rho}^{M\to N}(m,n)-P_{\Lambda_M(\rho)}^{\alpha}(m,n)|\notag\\
&\;\;\;\;=\mathrm{tr}\qty[\qty(M_m(\rho)-\sum_{m'}M_{m'}(\rho)\circ_{\alpha}E_{M}(m))E_N(n)]\notag\\
\begin{split}
&\;\;\;\;\leq \norm{M_m(\rho)\circ_{\alpha}\hspace{-7pt}\sum_{m': m'\neq m}\hspace{-7pt}E_M(m')+\hspace{-7pt}\sum_{m': m'\neq m}\hspace{-7pt}M_{m'}(\rho)\circ_{\alpha}E_M(m)}_{1}\\
&\;\;\;\;\;\;\;\;\cdot \norm{E_N(n)}.
\end{split}\notag
\end{align}
Thus, we obtain the desired inequality from the triangle inequality.
\end{proof}
The first term of our bound is the same as in the case of observables, \textit{i.e.}, when both \(M\) and \(N\) are projective measurements, which evaluates the invasiveness of the prior measurement.
Since \(M_m(\rho)\circ_{\alpha}E_M(m')\) can be regarded as the contribution associated with obtaining outcome \(m'\) in the post-measurement state conditioned on outcome \(m\), the second term of our bound can be viewed as evaluating the non-repeatability of \(M\).
However, as its precise operational meaning remains unclear, further research on the generalization to general measurements is required.

\section{Application to the Leggett-Garg inequality}
\label{Application to the Leggett-Garg inequality}
In this section, we discuss the quantum violation of the Leggett-Garg (LG) inequality \cite{Leggett1985,Emary2014}.
The LG inequality is an inequality regarding temporal correlations in a single system and holds under certain assumptions regarding realism.
We consider a sequence of times \(t_1<t_2<t_3\) and \(\pm 1\)-valued observables \(A_1,\, A_2,\, A_3\) at each time and denote the ``temporal correlation" between \(A_i\) and \(A_j\) at time \(t_i\) and \(t_j\), respectively, as \(C_{ij}\).
Then the simplest version of the LG inequality is given by \cite{Leggett1985}
\begin{equation}
\label{Leggett-Garg inequality}
K:=C_{12}+C_{23}-C_{13}\leq 1.
\end{equation}
Leggett and Garg argued that the above inequality holds under the assumptions of (i) macrorealism per se  and (ii) non-invasive measurability and (iii) induction, which together ensure the existence of a standard joint-probability distribution for the values of \(A_1,\, A_2,\, A_3\) at \(t_1,\, t_2,\, t_3\), respectively.

The problem here lies in the definition of the ``temporal correlation" as previously discussed.
If we measure at all of the times \(t_1,\, t_2,\, t_3\) in every run of the experiment and define ``temporal correlation" \(C_{ij}\) from the statistics of these measurement outcomes, then LG inequality \eqref{Leggett-Garg inequality} holds even in quantum theory since their standard joint-probability distribution for their outcomes exists.
Leggett and Garg argued \cite{Leggett1985} that if we measure two of the times among \(t_1,\, t_2,\, t_3\) in each run of the experiment and define ``temporal correlation" \(C_{ij}\) from the statistics of measurement outcomes for the runs of experiments where the measurement is conducted at \(t_i\) and \(t_j\), then quantum theory violates the LG inequality and \(K\) can take values as large as \(3/2\). 

Although the original discussion by Leggett and Garg described above concerns the operational correlation given by the sequential projective measurements, there have been many works that deal with algebraic correlations and related concepts \cite{Fritz2010,Hofmann2015,Halliwell2016,Pan2020,Ruskov2006,Williams2008,Palacios-Laloy2010,Goggin2011,Suzuki2012}.
Some works used weak measurement \cite{Ruskov2006,Williams2008,Pan2020,Palacios-Laloy2010,Goggin2011,Suzuki2012} in order to guarantee the non-invasiveness criteria and to insist that the quantum violation of the LG inequality implies the violation of the assumption of realism itself.
These works show that the same violation of the LG inequality can be observed for weak measurement as in the standard sequential projective measurement case, and it is also experimentally verified \cite{Palacios-Laloy2010,Goggin2011,Suzuki2012}.
Some of these works discussed the relation between anomalous weak values and the LG violation \cite{Pan2020,Williams2008,Goggin2011}, both of which signify the non-classicality of quantum systems.
Since the violation of the LG inequality is due to the non-existence of standard joint-probability distributions for the value \(A_1,\, A_2,\, A_3\) at \(t_1,\, t_2,\, t_3\), respectively, there are also some works that analyzed the quantum violation of the LG inequality by using quasi-probabilities \cite{Hofmann2015,Halliwell2016,Pan2020,Suzuki2012} as a quantum extension of standard joint-probability distributions.

Here, we first present that the analysis using algebraic correlation, quasi-probabilities, weak value, and the weak measurement is essentially equivalent.
From the general framework of quantum/quasi-classical representations \cite{Lee2017,Lee2018}, we observe that
\begin{equation}
\begin{aligned}
\label{equivalence between algebraic correlation, quasi-probability, and quasi-conditional expectation}
C_{ij}^{\#}
&:=\Bigl\langle f_0(a_i,a_j)^{\#}\Bigr\rangle_{\rho}\\
&=\sum_{a_i, a_j}a_ia_jP^{\#}_{\rho}(a_i,a_j)\\
&=\sum_{a_j}P(a_j)a_j\mathbb{E}_{\rho}^{\#}[A_i|a_j],
\end{aligned}
\end{equation}
where \(f_0(a_i,a_j)=a_ia_j\) denotes the classical two-point correlation function.
Thus, we find that it is equivalent whether we use algebraic correlation defined as the expectation of the quantization of the product of observables, or the quasi-joint probability distribution for observables in consideration, or corresonding quasi-expectation value once we fix the representation \(\#\).

Moreover, as discussed in Sec.~\ref{General framework of quantum/quasi-classical representation}, since quantization of two-point correlation function is restricted to the forms of \(f_0(a_i,a_j)^{\#}=A_i(t_i)\circ_{\alpha} A_j(t_j)\), the real part of \(C_{ij}^{\#}\) do not depend on the representation \(\#\).
This result shows that it is equivalent among representations in analyzing the quantum violation of the LG inequality, and that it ensures that the same violation can be observed whichever quasi-joint probability distributions we use.

We note that, however, as discussed in Sec.~\ref{General framework of quantum/quasi-classical representation}, other than Kirkwood-Dirac type representation \(\#^{\alpha}\), there is a value \(a_j\) in which \(\mathbb{E}[A_i|a_j]^{\#}\) is not well-defined.
Thus in that case, the last line of \eqref{equivalence between algebraic correlation, quasi-probability, and quasi-conditional expectation} is not well-defined, and if we sum over only for \(a_j\in \sigma(A_i(t_i))\), the same violation cannot be observed (see Appendix \ref{Quantum violation of the Leggett-Garg inequality from quasi-conditional expectations} for an explicit demonstration).

We now discuss the equivalence among operational correlations and algebraic correlations in the analysis of the quantum violation of the LG inequality.
In \cite{Fritz2010,Halliwell2016}, it is pointed out that for \(\pm 1\)-valued observables \(A\) and \(B\), their algebraic correlation coincide with the operational correlations
\begin{equation}
\biggl\langle \frac{\{A,B\}}{2}\biggr\rangle_{\rho}=\langle A\to B\rangle^{\mathrm{op}}_{\rho}.
\end{equation}
This fact certifies the equivalence among operational correlations and algebraic correlations when we analyze the quantum violation of the LG inequality.
Here, we prove the converse: such an equality holds only for \(\pm \alpha\)-valued observables for some \(\alpha\).
\begin{thm}
For a nontrivial quantum observable \(A\in \mathcal{K}_{\mathrm{sa}}(\mathcal{H})\backslash \{\lambda I\,|\,\lambda\in\mathbb{R}\}\), conditions (i)--(iii) below are equivalent to each other:
\begin{enumerate}
\item[(i)] Operational correlation and algebraic correlation always coincide, i.e.
\begin{equation}
^\forall B\in\mathcal{B}(\mathcal{H}),\; ^\forall \rho\in \mathcal{S}(\mathcal{H}),\;\langle A\to B\rangle_\rho^{\mathrm{op}}=\left\langle \frac{\{A,B\}}{2}\right\rangle_\rho
\end{equation}
\item[(ii)] The operational correlation and algebraic correlation between \(A\) and every \(\pm1\)-valued observable \(B\) coincide, i.e.
\begin{equation}
\begin{aligned}
&\sigma(B)=\{1,-1\}\\
&\implies ^\forall \rho\in \mathcal{S}(\mathcal{H}),\; \langle A\to B\rangle_\rho^{\mathrm{op}}=\left\langle \frac{\{A,B\}}{2}\right\rangle_\rho
\end{aligned}
\end{equation}
\item[(iii)] \(A\) is \(\pm \alpha\)-valued, i.e.
\begin{equation}
^\exists \alpha>0\; s.t.\;\sigma(A)=\{\alpha,-\alpha\}
\end{equation}
\end{enumerate}
\end{thm}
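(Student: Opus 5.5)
The plan is to turn everything into an operator identity via the spectral decomposition \(A=\sum_{a\in\sigma(A)}aP_A(a)\), and then compare matrix blocks. As in the proof of Prop.~\ref{algebraic description of operational correlation}, we have
\begin{equation*}
\langle A\to B\rangle^{\mathrm{op}}_\rho=\mathrm{tr}\Bigl[\rho\sum_{a}a\,P_A(a)BP_A(a)\Bigr],
\end{equation*}
and the algebraic correlation is \(\mathrm{tr}[\rho\{A,B\}/2]\). Both sides are linear in \(\rho\), and states span \(\mathcal{T}(\mathcal{H})\). So the equality for all \(\rho\in\mathcal{S}(\mathcal{H})\) is equivalent to the operator identity
\begin{equation*}
\sum_a a\,P_A(a)BP_A(a)=\tfrac12(AB+BA).
\end{equation*}
Compressing this identity by \(P_A(a)\) on the left and \(P_A(a')\) on the right yields the blockwise condition
\begin{equation*}
\delta_{aa'}\,a\,P_A(a)BP_A(a')=\tfrac{a+a'}{2}\,P_A(a)BP_A(a').
\end{equation*}
Diagonal blocks always agree. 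Hence the condition reduces to \((a+a')P_A(a)BP_A(a')=0\) for all \(a\neq a'\).

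For (iii)\(\Rightarrow\)(i), if \(\sigma(A)=\{\alpha,-\alpha\}\), then every off-diagonal pair satisfies \(a+a'=0\). The block identity therefore holds for every \(B\), and summing the blocks (the projections \(P_A(\pm\alpha)\) resolve the identity) gives (i). The implication (i)\(\Rightarrow\)(ii) is immediate by specialization.

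For (ii)\(\Rightarrow\)(iii), I will exhibit, for any two distinct eigenvalues \(a\neq a'\) of \(A\), a \(\pm1\)-valued \(B\) whose \((a,a')\) block is nonzero. Take unit eigenvectors \(u\in\mathrm{Ran}P_A(a)\) and \(v\in\mathrm{Ran}P_A(a')\), put \(w=(u+v)/\sqrt2\), and set \(B=I-2\ket{w}\bra{w}\). This \(B\) is a reflection with \(\sigma(B)=\{1,-1\}\) as soon as \(\dim\mathcal{H}\ge2\), which holds because \(A\) is nontrivial. Moreover \(\bra{u}B\ket{v}=-1\neq0\), so the block condition forces \(a+a'=0\). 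Since \(A\) is not a multiple of \(I\), it has at least two distinct eigenvalues. Three distinct eigenvalues \(a,b,c\) would give \(a+b=0=a+c\), hence \(b=c\), a contradiction. Therefore \(\sigma(A)=\{\alpha,-\alpha\}\) with \(\alpha>0\).

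The main obstacle is the functional-analytic bookkeeping rather than the algebra. First, I must justify passing from equality of expectations on all states to an operator identity, using the duality \(\mathcal{T}(\mathcal{H})\simeq\mathcal{K}(\mathcal{H})^*\) together with the ultraweak density of the span of states for bounded \(B\). Second, I must handle the spectrum of a compact \(A\) in infinite dimensions. There \(0\in\sigma(A)\) always, and the spectral projections must be understood to include \(P_A(0)\) (possibly zero), with eigenvectors taken only from nonzero projections. The argument above then shows that the set of eigenvalues carrying nonzero projections is \(\{\alpha,-\alpha\}\). I would state explicitly that \(\sigma(A)\) in (iii) is read as this set of eigenvalues, which is the paper's convention \(P_A=\{P_A(a)\}_{a\in\sigma(A)}\).
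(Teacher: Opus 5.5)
Your proof is correct. It uses the same mechanism as the paper's argument: the discrepancy sees only the off-diagonal blocks $P_A(a)BP_A(a')$, weighted by $(a+a')/2$. The organization differs, though.

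The paper stays in the Schr\"odinger picture. It writes the discrepancy as $\mathrm{tr}[\tfrac{B}{2}\{A,\Lambda_A(\rho)-\rho\}]$ and evaluates it on a single test pair, $\rho=\ket{+}\bra{+}$ with $\ket{+}=(\ket{\alpha}+\ket{\alpha'})/\sqrt2$ and $B=\ket{\alpha}\bra{\alpha'}+\ket{\alpha'}\bra{\alpha}$, obtaining the explicit value $|(\alpha+\alpha')/2|$. You instead pass to the operator identity $\sum_a aP_A(a)BP_A(a)=\tfrac12\{A,B\}$. This gives a cleaner criterion for each fixed $B$: $(a+a')P_A(a)BP_A(a')=0$ for all $a\neq a'$. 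Both (ii)$\Rightarrow$(iii) and (iii)$\Rightarrow$(i) follow from it at once.

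Your version has one real advantage. The paper's $B$ vanishes on the orthogonal complement of $\mathrm{span}\{\ket{\alpha},\ket{\alpha'}\}$. So for $\dim\mathcal{H}>2$ it has $0\in\sigma(B)$ and does not literally satisfy the hypothesis of (ii). Your reflection $I-2\ket{w}\bra{w}$ does satisfy it. The paper's version, in turn, gives a quantitative size for the discrepancy on a concrete state rather than only showing it is nonzero.

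Two of the obstacles you flag are lighter than you suggest:
\begin{itemize}
\item Passing to the operator identity needs no density argument. It only needs $\bra{\psi}X\ket{\psi}=0$ for all $\psi$, followed by polarization.
\item No reinterpretation of $\sigma(A)$ is needed. Once the eigenvalues carrying nonzero spectral projections are exactly $\{\alpha,-\alpha\}$, the kernel projection vanishes. Then $\mathcal{H}=\mathrm{Ran}P_A(\alpha)\oplus\mathrm{Ran}P_A(-\alpha)$, which is finite-dimensional because nonzero eigenvalues of a compact operator have finite multiplicity. Hence $\sigma(A)=\{\alpha,-\alpha\}$ holds literally. This is the point the paper's closing remark about the continuous spectrum is reaching for.
\end{itemize}
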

\begin{proof}
\;\\
\vspace{-\baselineskip}
\begin{description}
\item[((i)\(\implies\)(ii))]
\;\par
Obvious.
\item[((ii)\(\implies\)(iii))]
\;\par
First, we prove that the sum of any pair of eigenvalues of \(A\) must be \(0\).
From Prop.~\ref{algebraic description of operational correlation}, it follows that
\begin{align}
&\qty|\langle A\to B\rangle_\rho^{\mathrm{op}}-\left\langle \frac{\{A,B\}}{2}\right\rangle_\rho|\notag\\
&=\qty|\mathrm{tr}\qty[\frac{\{A,B\}}{2}\qty(\Lambda_A(\rho)-\rho)]|\notag\\
&=\qty|\mathrm{tr}\qty[\frac{B}{2}\qty{A,\,\qty(\Lambda_A(\rho)-\rho)}]|\notag\\
&=\qty|\mathrm{tr}\qty[\frac{B}{2}\sum_{a'}a'\Bigl\{P_A(a'),\,\Bigl(P_A(a')\rho P_A(a')-\rho\Bigr)\Bigr\}]|\notag\\
\begin{split}
&=\Biggl|\mathrm{tr}\Biggl[\frac{B}{2}\sum_{a'}a'\Bigl(P_A(a')\rho \bigl(I-P_A(a')\bigr)\\
&\;\;\;\;\;\;\;\;\;\;\;\;\;\;\;\;\;\;\;\;+\bigl(I-P_A(a')\bigr)\rho P_A(a')\Bigr)\Biggr]\Biggr|.
\end{split}\notag
\end{align}
We take two eigenvectors \(\ket{\alpha}\in \mathrm{Ran}(P_A(\alpha))\) and \(\ket{\alpha'}\in\mathrm{Ran}(P_A(\alpha'))\) for \(\alpha\neq \alpha'\) of \(A\) (since \(A\in\mathcal{K}_{\mathrm{sa}}(\mathcal{H})\backslash\{\lambda I|\lambda\in\mathbb{R}\}\), we can assume \(A\) has at least two eigenvalues) and take \(B\) and \(\rho\) as
\begin{align*}
B&:=\ket{\alpha}\bra{\alpha'}+\ket{\alpha'}\bra{\alpha},\\
\rho&:=\qty(\frac{\ket{\alpha}+\ket{\alpha'}}{\sqrt{2}})\qty(\frac{\bra{\alpha}+\bra{\alpha'}}{\sqrt{2}}).
\end{align*}
Then for this \(\rho\) and \(B\), we find
\begin{align}
&\qty|\langle A\to B\rangle_\rho^{\mathrm{op}}-\left\langle \frac{\{A,B\}}{2}\right\rangle_\rho|\notag\\
\begin{split}
&=\Biggl|\mathrm{tr}\Biggl[\frac{B}{2}\Biggl(\alpha\Bigl(P_A(\alpha)\rho \bigl(I-P_A(\alpha)\bigr)+\bigl(I-P_A(\alpha)\bigr)\rho P_A(\alpha)\Bigr)\\
&\;\;\;\;\;+\alpha'\Bigl( P_A(\alpha')\rho \bigl(I-P_A(\alpha')\bigr)+\bigl(I-P_A(\alpha')\bigr)\rho P_A(\alpha')\Bigr)\Biggr)\Biggr]\Biggr|\\
\end{split}\notag\\
&=\qty|\frac{\alpha+\alpha'}{2}|.\notag
\end{align}
Therefore, we find that \(\alpha+\alpha'=0\) holds if condition (ii) holds.
Assume \(A\) have more than three different eigenvalues \(\alpha, \alpha', \alpha''\).
From the above argument, we obtain \(\alpha+\alpha'=\alpha+\alpha''=0\), which contradicts to \(\alpha'\neq\alpha''\).
Thus, we conclude that \(A\) have only two eigenvalues \(\alpha,\alpha'\) that satisfies \(\alpha+\alpha'=0\).
Since a self-adjoint compact operator that has only a finite number of eigenvalues does not contain \(0\) in its continuous spectrum, we conclude that condition (iii) holds.
\item[((iii)\(\implies\)(i))]
\;\par
From the calculation above and an equality \(P_A(\alpha)+P_A(-\alpha)=I\), we find
\begin{align}
&\qty|\langle A\to B\rangle_\rho^{\mathrm{op}}-\left\langle \frac{\{A,B\}}{2}\right\rangle_\rho|\notag\\
\begin{split}
&=\Biggl|\mathrm{tr}\Biggl[\frac{B}{2}\Biggl(\alpha\Bigl(P_A(\alpha)\rho P_A(-\alpha)+P_A(-\alpha)\rho P_A(\alpha)\Bigr)\\
&\;\;\;\;\;\;\;\;\;\;-\alpha\Bigl( P_A(-\alpha)\rho P_A(\alpha)+P_A(\alpha)\rho P_A(-\alpha)\Bigr)\Biggr)\Biggr]\Biggr|\\
\end{split}\notag\\
&=0\notag.
\end{align}
\end{description}
\end{proof}
This theorem establishes a novel equivalence condition for the coincidence between operational and algebraic correlations.
Moreover, as an application, this result reveal that the equivalence among the ways of observing quantum violation of the LG inequality is because the LG inequality concerns only two-point correlation of  bivalent observables.

\section{Conclusion}
\label{Conclusion}
In this work, we have investigated the relations among several definitions of correlation functions and their underlying (quasi-)joint probability distributions, based on the general framework of quantum/quasi-classical representations.
We show that the differences among them are bounded by the invasiveness of measurements, a quantification of how the system is disturbed beyond the mere increase of our knowledge.
We also prove that the coincidence between operational and algebraic correlations is strictly limited to that of dichotomic observables.
As a prominent application, we reveal the structure of the equivalence among different ways of observing the quantum violation of the Leggett-Garg (LG) inequality.

Our results establish an operational justification of widely-used algebraic concepts in a quantitative manner, and bridge between different approaches for the foundations of quantum theory.
Thus, as we have done for the LG inequality, various applications to a wide range of topics in quantum foundations can be expected.
It is also desirable to extend our discussion to generic measurements beyond projective measurements of observables and to other scenarios beyond sequential measurements.

\begin{acknowledgments}
This work was supported by FoPM, WINGS Program, the University of Tokyo (S.~U.) and JSPS Grant-in-Aidfor Scientific Research (KAKENHI), Grant No. JP22K13970 (J.~L.).
\end{acknowledgments}

\appendix

\section{Example: qubit system}
\label{example of qubit system}
Here, we consider the qubit system as an example.
Without loss of generality, we can assume \(A=\alpha_0 I+\alpha_1\sigma_z\) and \(B=\beta_0 I +\beta_1 (\cos\theta\sigma_z+\sin\theta\sigma_x)\) for certain \(\alpha_0,\, \,\beta_0\in\mathbb{R},\,\alpha_1,\,\beta_1>0,\,\theta\in[0,2\pi)\) by taking a suitable basis.
We remark that \(P_A(\alpha_0\pm\alpha_1)=P_{z\pm}:=P_{\sigma_z}(\pm1)\) and \(P_B(\beta_0\pm\beta_1)=P_{\theta\pm}:=P_{\sigma_\theta}(\pm1)\) (\(\sigma_\theta:=\cos\theta\sigma_z+\sin\theta\sigma_x\)).

The difference between operational joint-probability distributions that represent the statistics of sequential projective measurements is given by
\begin{equation}
\begin{aligned}
P_\rho^{A\to B}(\alpha_0\pm\alpha_1\to\beta_0+\beta_1)
&=P_\rho^{\sigma_z\to \sigma_\theta}(\pm1\to1)\\
&=\mathrm{tr}\qty[\rho\frac{I\pm\sigma_z}{2}\frac{I+\sigma_\theta}{2}\frac{I\pm\sigma_z}{2}]\\
&=\frac{1\pm\cos\theta}{4}(1\pm\langle \sigma_z\rangle_\rho),\\
P_\rho^{A\to B}(\alpha_0\pm\alpha_1\to\beta_0-\beta_1)
&=\frac{1\mp\cos\theta}{4}(1\pm\langle \sigma_z\rangle_\rho),\\
P_\rho^{B\to A}(\beta_0+\beta_1\to\alpha_0\pm\alpha_1)
&=\frac{1\pm\cos\theta}{4}(1+\langle \sigma_\theta\rangle_\rho),\\
P_\rho^{B\to A}(\beta_0-\beta_1\to\alpha_0\pm\alpha_1)
&=\frac{1\mp\cos\theta}{4}(1-\langle \sigma_\theta\rangle_\rho),
\end{aligned}
\end{equation}
\begin{align}
&\norm{P_\rho^{A\to B}-P_\rho^{B\to A}}_{\mathrm{TV}}\notag\\
\begin{split}
&=\frac{1+\cos\theta}{2}\qty|(1-\cos\theta)\langle \sigma_z\rangle_\rho-\sin\theta\langle \sigma_x\rangle_\rho|\\
&\;\;\;\;\;+\frac{1-\cos\theta}{2}\qty|(1+\cos\theta)\langle \sigma_z\rangle_\rho+\sin\theta\langle \sigma_x\rangle_\rho|
\end{split}\notag\\
&=\mathrm{max}\qty{\Bigl|\sin\theta\langle\sigma_x\rangle_\rho\Bigr|,\,\qty|\sin\theta\Bigl(-\sin\theta\langle\sigma_{z}\rangle_\rho+\cos\theta\langle\sigma_{x}\rangle_\rho\Bigr)|
}.
\end{align}

The upper bound of the difference between them derived from our inequality is calculated as follows.
First, the operator norms of the commutator and the anti-commutator of the spectral family of the observables in consideration are given by
\begin{equation}
\begin{aligned}
\frac{\{P_{z\pm},P_{\theta+}\}}{2}
&=\frac{1}{4}\qty((1\pm\cos\theta)(I\pm\sigma_z)+\sin\theta\sigma_x),\\
\frac{\{P_{z\pm},P_{\theta-}\}}{2}
&=\frac{1}{4}\qty((1\mp\cos\theta)(I\pm\sigma_z)-\sin\theta\sigma_x),
\end{aligned}
\end{equation}
\begin{equation}
\begin{aligned}
P_{z\pm}[P_{z\pm},P_{\theta+}]
&=\frac{I\pm\sigma_z}{2}\qty(\pm i\sin\theta\frac{\sigma_y}{2})
=\frac{\sin\theta}{4}\qty(\sigma_x\pm i\sigma_y),\\
P_{z\pm}[P_{z\pm},P_{\theta-}]
&=\frac{\sin\theta}{4}\qty(-\sigma_x\mp i\sigma_y),\\
P_{\theta+}[P_{z\pm},P_{\theta+}]
&=\frac{\sin\theta}{4}\qty(\pm\cos\theta\sigma_x\mp\sin\theta\sigma_z\pm i\sigma_y),\\
P_{\theta-}[P_{z\pm},P_{\theta-}]
&=\frac{\sin\theta}{4}\qty(\pm\cos\theta\sigma_x\mp\sin\theta\sigma_z\mp i\sigma_y),
\end{aligned}
\end{equation}
\begin{align}
&\sum_{a,b}\norm{\frac{\{P_A(a),P_B(b)\}}{2}}\notag\\
&=1+\frac{1}{2}\sqrt{(1+\cos\theta)^2+\sin^2\theta})+\frac{1}{2}\sqrt{(1-\cos\theta)^2+\sin^2\theta}),
\end{align}
\begin{equation}
\begin{aligned}
\sum_{a,b}\norm{P_A(a)[P_A(a),P_B(b)]}
&=\sqrt{2}\qty|\sin\theta|,\\
\sum_{a,b}\norm{P_B(b)[P_B(b),P_A(a)]}
&=\sqrt{2}\qty|\sin\theta|.
\end{aligned}
\end{equation}
The invasiveness of the projective measurement of the observables in consideration is given by 
\begin{equation}
\begin{aligned}
&\Lambda_A(\rho)
=\frac{1}{2}(I+\langle\sigma_z\rangle_\rho \sigma_z),\\
&\Lambda_B(\rho)
=\frac{1}{2}\Bigl(I+\qty(\cos\theta\langle\sigma_z\rangle_\rho+\sin\theta\langle\sigma_x\rangle_\rho) \qty(\cos\theta\sigma_z+\sin\theta\sigma_x)\Bigr),
\end{aligned}
\end{equation}
\begin{equation}
\begin{aligned}
\norm{\Lambda_A(\rho)-\rho}_{1}
&=\sqrt{\langle\sigma_x\rangle_\rho^2+\langle\sigma_y\rangle_\rho^2},\\
\norm{\Lambda_B(\rho)-\rho}_{1}
&=\sqrt{\qty(-\sin\theta\langle\sigma_z\rangle_\rho+\cos\theta\langle\sigma_x\rangle_\rho)^2+\langle\sigma_y\rangle_\rho^2}.
\end{aligned}
\end{equation}

The lower bound of our inequality becomes
\begin{align}
\Delta_A(B;\rho)
&=\underset{f\in C_0(\sigma(B))}{\mathrm{sup}}\frac{\qty|\langle\delta_A(f(B))\rangle_\rho|}{\norm{f(B)}}\notag\\
&=\underset{\beta_0,\beta_1\in\mathbb{C}}{\mathrm{sup}}\frac{\qty|\langle\delta_A(\beta_0I+\beta_1\sigma_\theta)\rangle_\rho|}{\norm{\beta_0I+\beta_1\sigma_\theta}}\notag\\
&=\underset{\beta_0,\beta_1\in\mathbb{C}}{\mathrm{sup}}\frac{\qty|\beta_1\sin\theta\langle\sigma_x\rangle_\rho|}{\mathrm{max}\{\qty|\beta_0+\beta_1|, \qty|\beta_0-\beta_1|\}}\notag\\
&=\qty|\sin\theta\langle\sigma_x\rangle_\rho|,\\
\Delta_B(A;\rho)
&=\underset{\alpha_0,\alpha_1\in\mathbb{C}}{\mathrm{sup}}\frac{\qty|\langle\delta_B(\alpha_0I+\alpha_1\sigma_z)\rangle_\rho|}{\norm{\alpha_0I+\alpha_1\sigma_z}}\notag\\
&=\qty|\sin\theta\Bigl(-\sin\theta\langle\sigma_{z}\rangle_\rho+\cos\theta\langle\sigma_{x}\rangle_\rho\Bigr)|.
\end{align}

Thus, our inequality
\begin{equation}
\begin{aligned}
&\;\;\;\;\;\mathrm{max}\qty{\Delta_A(B;\rho),\,\Delta_B(A;\rho)}\\
&\leq \norm{P_\rho^{A\to B}-P_\rho^{B\to A}}_{\mathrm{TV}}\\
&\leq \mathrm{min}
\left\{
\begin{aligned}
&\sum_{a,b}\norm{\frac{\{P_A(a),P_B(b)\}}{2}},\\
&\sum_{a,b}\norm{P_A(a)[P_A(a),P_B(b)]}
\end{aligned}
\right\}
\mathrm{Inv}_A(\rho)\\
&\;\;\;\;\;+\mathrm{min}
\left\{
\begin{aligned}
&\sum_{a,b}\norm{\frac{\{P_A(a),P_B(b)\}}{2}},\\
&\sum_{a,b}\norm{P_B(a)[P_A(a),P_B(b)]}
\end{aligned}
\right\}
\mathrm{Inv}_B(\rho)
\end{aligned}
\end{equation}
becomes
\begin{equation}
\begin{aligned}
&\;\;\;\;\;\mathrm{max}\qty{\Bigl|\sin\theta\langle\sigma_x\rangle_\rho\Bigr|,\,\qty|\sin\theta\Bigl(-\sin\theta\langle\sigma_{z}\rangle_\rho+\cos\theta\langle\sigma_{x}\rangle_\rho\Bigr)|
}\\
&\leq\mathrm{max}\qty{\Bigl|\sin\theta\langle\sigma_x\rangle_\rho\Bigr|,\,\qty|\sin\theta\Bigl(-\sin\theta\langle\sigma_{z}\rangle_\rho+\cos\theta\langle\sigma_{x}\rangle_\rho\Bigr)|
}\\
&\leq \mathrm{min}
\left\{
\begin{aligned}
&\sqrt{2}\qty|\sin\theta|,\\
&1+\frac{1}{2}\sqrt{(1+\cos\theta)^2+\sin^2\theta})\\
&\;\;\;\;\;+\frac{1}{2}\sqrt{(1-\cos\theta)^2+\sin^2\theta})
\end{aligned}
\right\}
\biggl(\sqrt{\langle\sigma_x\rangle_\rho^2+\langle\sigma_y\rangle_\rho^2}\\
&\;\;\;\;\;+\sqrt{\qty(-\sin\theta\langle\sigma_z\rangle_\rho+\cos\theta\langle\sigma_x\rangle_\rho)^2+\langle\sigma_y\rangle_\rho^2}\biggr).
\end{aligned}
\end{equation}
We can see that the lower bound of this inequality is indeed an equality in this case (See, Fig.\ref{difference among joint probabilities}).
\begin{figure}
\centering
\includegraphics[width=60mm]{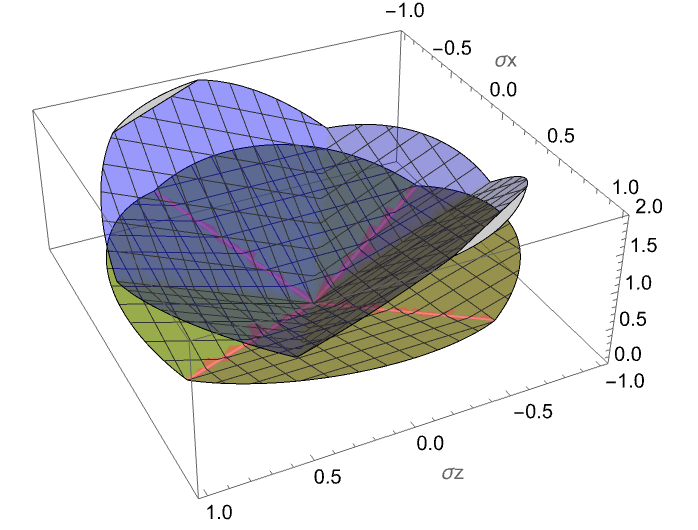}
\caption{The application of our inequality to the qubit system for the case of \(\theta=\frac{\pi}{3}\) in the states satisfies \(\langle \sigma_y \rangle_\rho =0\). The red surface describes the difference between the two operational probabilities. The blue and green ones shows the upper and the lower bounds derived from our inequality, respectively. Here, red and green ones coincide everywhere.}
\label{difference among joint probabilities}
\end{figure}

The state-dependent non-commutativity of observables in consideration are given as
\begin{equation}
\begin{aligned}
\langle [A,B]\rangle_{\rho}
&=i2\alpha_1\beta_1\sin\theta\langle \sigma_z \rangle_\rho,\\
\norm{[A,B]}_{\rho}
&:=\mathrm{tr}[-([A,B])^2\rho]=4\alpha_1^2\beta_1^2\sin^2\theta.
\end{aligned}
\end{equation}
Thus, we find that, even though two orders of operational joint-probability distributions coincide \(\norm{P_\rho^{A\to B}-P_\rho^{B\to A}}_{\mathrm{TV}}=0\), state-dependent non-commutativity do not vanish, \textit{i.e.}, \(\langle [A,B]\rangle_{\rho}\neq0\) and \(\norm{[A,B]}_{\rho}\neq0\), for the state \(\rho=\ket{y+}\bra{y+}\) and \(\theta\neq 0,\pi\).

\section{Quantum violation of the Leggett-Garg inequality from quasi-conditional expectations}
\label{Quantum violation of the Leggett-Garg inequality from quasi-conditional expectations}
Here, we demonstrate the analysis of the quantum violation of the Leggett-Garg (LG) inequality based on quasi-conditional expectations.
In each run of the experiments, we measure at \(t_1\) and \(t_3\), and assume the value at \(t_2\) as its quasi-coditional expectation.
This matches the original philosophy of Aharonov's weak value as ``time-symmetric formulation" of quantum theory \cite{Aharonov1988,Aharonov1990}.
In the case of the Kirkwood-Dirac representation, the quasi-conditional expectation \(\mathbb{E}_{\ket{a_1}}^{\mathrm{KD}}[A_2(t_2)\,|\,a_3]\) can be measured by the weak measurement of \(A_2\) at \(t_2\) by taking pre- and post-selected states as the post-measurement states \(\ket{a_1}\) and \(\ket{a_3}\) of projective measurements of \(A_1\) and \(A_3\) at \(t_1\) and \(t_3\), respectively.

In this scenario, \(K\) can be written as
\begin{equation}
\begin{aligned}
K^{\#}&:=\sum_{a_1,a_3}\Bigl(a_1\mathbb{E}_{\ket{a_1}}^{\#}[A_2(t_2)\,|\,a_3]+\mathbb{E}_{\ket{a_1}}^{\#}[A_2(t_2)\,|\,a_3]a_3-a_1a_3\Bigr)\\
&\;\;\;\;\;\;\;\;\;\;\;\;\;\;\;\cdot P_{\rho}^{A_1(t_1)\to A_3(t_3)}(a_1\to a_3).
\end{aligned}
\end{equation}

Here, as a simple example, we consider a qubit system governed by a Hamiltonian \(H=\omega \sigma_x\), and correlations among a dichotomic observable \(\sigma_z\) at each time, \(t_1=0, t_2=t, t_3=T\).
For simplicity, we assume that we obtain the outcome \(+1\) for the first measurement at \(t_1=0\).
Then, \(K\) can be written as
\begin{align}
\label{LG from quasi-conditional expectation}
K^{\#}
&=P_{\ket{z+}}^{\sigma_z(T)}(+1)\Bigl(1\cdot E_{\ket{z+}}^{\#}\bigl[\sigma_z(t)\,|+1\bigr]\notag\\
&\;\;\;\;\;+E_{\ket{z+}}^{\#}\bigl[\sigma_z(t)\,|+1\bigr]\cdot 1+1\cdot 1\Bigr)+P_{\ket{z+}}^{\sigma_z(T)}(-1)\notag\\
&\;\;\;\;\;\cdot \Bigl(1\cdot E_{\ket{z+}}^{\#}\bigl[\sigma_z(t)\,|+1\bigr]+E_{\ket{z+}}^{\#}\bigl[\sigma_z(t)\,|+1\bigr]\cdot (-1)\notag\\
&\;\;\;\;\;+1\cdot (-1)\Bigr)\notag\\
\begin{split}
&=|\bra{z+}U(T)\ket{z+}|^2\Bigl(2E_{\ket{z+}}^{\#}\bigl[\sigma_z(t)\,|+1\bigr]-1\Bigr)\\
&\;\;\;\;\;+|\bra{z-}U(T)\ket{z+}|^2.
\end{split}
\end{align}
Here, \(U(T)=e^{-iHT}\) denotes the time evolution unitary.
For the case where the Hamiltonian is given by \(H=\omega \sigma_x\), \(K\) becomes
\begin{align}
K^{\#}
&=\qty(\cos \frac{\omega T}{2})^2\Bigl(2E_{\ket{z+}}^{\#}\bigl[\sigma_z(t)\,|+1\bigr]-1\Bigr)+\qty(\sin \frac{\omega T}{2})^2.
\end{align}
Here, quasi-conditional expectations of the Kirkwood-Dirac representation \(\#^{\mathrm{KD}}\) and the semi-symmetrized representation \(\hat{\#}_{(A,B)}^{\mathrm{SS}}(s,t)=e^{-i\frac{t}{2}B}e^{-isA}e^{-i\frac{t}{2}B}\) are given as
\begin{align}
E_{\ket{z+}}^{\mathrm{KD}}\bigl[\sigma_z(t)\,|+1\bigr]
&=\frac{\bra{z+}U(T-t)\sigma_zU(t)\ket{z+}}{\bra{z+}U(T)\ket{z+}}\notag\\
&=\frac{\cos\qty(\omega \qty(\frac{T}{2}- t))}{\cos\frac{\omega T}{2}},\\
\begin{split}
E_{\ket{z+}}^{\mathrm{SS}}\bigl[\sigma_z(t)\,|+1\bigr]
&=\frac{1}{4\qty(\cos \frac{\omega T}{2})^2}\Bigl(\cos \omega t+2\cos(\omega (T-t))\\
&\;\;\;\;\;+\cos(\omega (2T-t))\Bigr).
\end{split}
\end{align}
Thus, \(K\) becomes
\begin{align}
K^{\mathrm{KD}}
&=2\cos \qty(\frac{\omega T}{2}) \cos \qty(\omega\qty(\frac{T}{2}-t))-\cos \omega T,\\
\begin{split}
K^{\mathrm{SS}}
&=\frac{1}{2}\cos \omega t+\cos(\omega (T-t))+\frac{1}{2}\cos(\omega (2T-t))\\
&\;\;\;\;\;-\cos \omega T.
\end{split}
\end{align}
While \(K\) takes the maximum value \(K^{\mathrm{KD}}_{\mathrm{max}}=\frac{3}{2}\) at \(t=\frac{T}{2}=\frac{\pi}{3}\) and violates the LG inequality for the Kirkwood-Dirac representation, the maximum value of \(K\) for the semi-symmetrized representation is \(K^{\mathrm{SS}}_{\mathrm{max}}=1\) at \(t=T\) and does not violate the LG inequality.

The difference between \(K^{\mathrm{KD}}\) and \(K^{\mathrm{SS}}\) is due to the relationship between the support of the distribution and the spectrum of the observables \cite{Umekawa2024}.
While the Kirkwood-Dirac distribution \(P^{\mathrm{KD}}\) is supported on \(\sigma(\sigma_z(t))\times \sigma(\sigma_z(T))=\{\pm1\}\times\{\pm1\}\), the semi-symmetrized distribution \(P^{\mathrm{SS}}\) takes non-zero values also at \((\pm1,0)\).
Thus, if we take the sum only over \(\sigma(\sigma_z(t))\times \sigma(\sigma_z(T))\) in the summation of \eqref{LG from quasi-conditional expectation}, we cannot reproduce the same violation.
This result suggests that, in the semi-symmetrized representation, the non-classical features of the quantum system are encoded in the support of the quasi-joint probability distribution outside the spectrum of the observables.


\bibliography{Ope-Alg_Cor}

\end{document}